\documentclass{article}



\usepackage[nonatbib,final]{neurips_2019}

\usepackage{hyperref} 


\usepackage[utf8]{inputenc} 
\usepackage[T1]{fontenc}    
\usepackage{url}            
\usepackage{booktabs}       
\usepackage{amsfonts}       
\usepackage{nicefrac}       
\usepackage{microtype}      
\usepackage{algorithm}
\usepackage{algpseudocode}
\usepackage{amsthm}
\usepackage{amstext}
\usepackage{xcolor}
\newtheorem{definition}{Definition}
\newtheorem{lemma}{Lemma}
\newtheorem{theorem}{Theorem}
\newtheorem{example}{Example}
\newtheorem{remark}{Remark}
\newcommand{\eps}{\epsilon}
\newcommand{\Auction}{\textsc{Auction} }

\newcommand{\pauction}{\textsc{PrivAuc} }

\newcommand{\cB}{\mathbb{\textbf{B}}}

\newcommand{\E}{\mathbb{E}}

\newcommand{\A}{\mathcal{A}}

\newcommand{\Prob}[2]{\text{Pr}_{#1}[{#2}]}

\newcommand{\sn}[1]{\iffalse \textcolor{purple}{[sn: #1]} \fi}
\newcommand{\ar}[1]{\iffalse #1 \fi}
\newcommand{\mk}[1]{\iffalse #1 \fi}
\newcommand{\ed}[1]{\iffalse #1 \fi}
\title{Optimal, Truthful, and Private Securities Lending}

%

\author{Emily Diana $\;$ Michael Kearns $\;$ Seth Neel $\;$ Aaron Roth}

\begin{document}

\maketitle
\begin{abstract}
\sn{should we add more details here?}
We consider a fundamental dynamic allocation problem motivated by the problem of
\emph{securities lending} in financial markets, the mechanism underlying the short selling of stocks.
A lender would like to distribute a finite number of identical copies of some scarce
resource to $n$ clients, each of whom has a private demand that is unknown to the lender.
The lender would like to maximize the usage of the resource --- avoiding allocating more to a client than her true demand --- but
is constrained to sell the resource at a pre-specified price per unit, and thus cannot use prices to incentivize truthful reporting.
We first show that the Bayesian optimal algorithm
for the one-shot problem --- which maximizes the resource's expected usage according to
the posterior expectation of demand, given reports --- actually incentivizes truthful reporting as a dominant strategy.
Because true demands in the securities lending problem are often sensitive
information that the client would like to hide from competitors, we then consider the problem under the additional desideratum
of (joint) differential privacy. We give an algorithm, based on simple dynamics for computing market equilibria,
that is simultaneously private, approximately optimal, and approximately dominant-strategy truthful. Finally, we 
leverage this private algorithm to construct an approximately truthful, optimal mechanism for the extensive form
multi-round auction where the lender does not have access to the true joint distributions between clients' requests and demands. 
\end{abstract}

\section{Introduction}
\label{intro}
In this work, we consider the allocation of a scarce commodity in
settings in which privacy concerns or demand uncertainty may be in conflict with truthful reporting.
\sn{changed to reflect new motivation}
In our model,
some number of {\em clients\/}
request desired amounts of the commodity; these requests may or may not truthfully
reflect a client's actual demand. Upon receiving the requests, a centralized {\em allocator\/}
must decide upon a distribution of the available supply and only later learns a (possibly
censored) report of the true demands.
The allocator must charge a fixed fee per unit that does not vary across clients,
and thus prices cannot be used as a tool to enforce truthfulness, as is standard in mechanism design. 

The primary motivation for this particular framework
is the problem of {\em securities lending\/} (or ``stock loan'') in financial markets~\cite{Wiki,Rothbort,Picardo}. In order to take
a short (negative) position in a publicly traded company, an investor (typically a
professional entity such as a hedge fund, asset manager or mutual fund)
must temporarily borrow shares from a party that actually owns them (typically a brokerage).
The investor then immediately sells (or ``shorts'')
the shares at the current market price; in the event of a successful short, the price subsequently declines,
and the investor buys back the shares at the lower price, pays off the stock loan (in shares),
and profits from the difference.

In this problem, the clients are the investors desiring to short a particular stock, and
the allocator is typically a brokerage firm that lends on a flat per-share fee basis~\cite{Chen,Wiki}. More
complex fee structures, including differential or volume-based pricing, are discouraged due to their complexity
and to the presence of competing flat-fee brokerages.
The commodity
is the shares to loan; it is scarce because the brokerage has a limited supply, and
demand can be high for stocks that have a great deal of ``short interest'' and are therefore
known as ``hard to borrow.''\footnote{Indeed, financial analysts often use a large number of
shares on loan as a negative indicator of market sentiment around a public company,
and hard-to-borrow tables are a valued source of such information~\cite{WSJ}.}
In general, the allocator would like to maximize the number of loaned shares actually shorted
(sold by clients in the market) since the allocator is paid fees only on the shares shorted, not on those
allocated but unused by clients. In many markets there are severe penalties to the allocator
for lending out more shares than they hold~\cite{Young-sil} (known as a ``naked short''\cite{Wiki}),
so overallocation is not an option.

Clients generally wish to be allocated their true demands. \sn{tried to add commentary about how they also might request in the morning but their true demand may not materialize until later.} But, there are natural reasons why clients may
choose not to, or be unable to, truthfully request their true demands. It is often the case that a client may request shares in the morning anticipating a trade, but their exact demand (possibly as determined by an algorithm that assesses daily market conditions) might not be decided until later in the day. Or, as with many financial market settings, clients may also have distinct privacy concerns. In
particular, sophisticated clients such as quantitative hedge funds may worry that a truthful
request for a large loan --- such as a million shares of Tesla ---  will leak the
client's view, signal, or intentions regarding Tesla to the broader market, thus compromising the client's
ability to profit from the short trade or revealing proprietary information the
client may possess.
\footnote{There is indeed evidence that short trades carry information about future prices~\cite{Lee,Sun,Boehmer}.}
It is therefore not uncommon for clients to deliberately over-request across all securities in
an effort to hide their true demands, especially if there is no penalty or disincentive for such
untruthful requests.

In this work, we formalize the setting above and show the following results:
\begin{itemize}
\item Assuming knowledge of the client's joint distributions over
true demands and requests, we derive the optimal allocation rule. This result is a generalization of the
	optimal solution for an unrelated but structurally similar allocation problem in
	financial markets known as {\em order routing in dark pools}~\cite{darkpool} (Theorem~\ref{thm:greedy1}).
\item We then show that for any client whose utility depends only
	on maximizing shares received up to her true demand, and not on any
	privacy considerations, truthfulness is a dominant strategy under the optimal allocation
	rule in a model in which clients must commit to a joint distribution on demands and reports, which the mechanism is assumed to know. Thus, such clients can safely request their true demands regardless of the
	behavior of others (Theorem~\ref{truthfulness}). 
	\item We then turn to privacy considerations. Rather than attempting to encode privacy
	explicitly into client utilities, which would result in a brittle model, we
	instead prefer the reduced-form approach of {\em joint differential privacy\/} (JDP), which provides protection against {\em any\/} privacy-related
	concerns of clients, including those permitting collusion by other clients.\footnote{See e.g. \cite{GR11} for a discussion of how the
	guarantees of differential privacy can be used to upper bound future costs resulting from information disclosure, without the need to model the specifics of the utility function.}
	We show how the optimal allocation policy can be implemented
	as a virtual ascending auction among clients, which is amenable to private implementation. The resulting algorithm provides near-optimal allocations for the
	allocator, while offering clients a solution in which truthfulness is an approximately
	dominant strategy and privacy is guaranteed (Theorems~\ref{privacc}, ~\ref{thm:privtrue}).

\item Finally we show that we can leverage our private ascending auction to construct an approximately truthful mechanism, which is approximately optimal given that clients report truthfully, for the significantly more general setting in which clients choose their reports sequentially in an arbitrarily long (or infinite) horizon extensive form game, the mechanism is not assumed to know the joint distribution on client demands and reports --- and in fact, the clients are not restricted to playing from any such static distribution. (Theorems~\ref{thm:apxtrue}, ~\ref{still:opt}).
It turns out that the privacy property of our mechanism is exactly what guarantees approximate dominant-strategy truthfulness in this more general setting, which is unusual: generally multi-round auctions cannot be shown to be dominant-strategy truthful because of the potential for bidders to threaten one another. 

\end{itemize}
\subsection{Related Work}

The work that is most closely related to our first set of results is the aforementioned paper on
order routing in dark pools~\cite{darkpool}. While motivated by a rather different trading problem,
from our perspective, their model can be viewed as the
simplification of our setting in which clients have unknown true demands but do not have the opportunity to
first provide a request or report to the allocator. The focus of~\cite{darkpool} is on the problem of
efficiently learning the optimal allocation from censored observations.
In contrast,
the presence of requests in our model makes incentive, truthfulness, and privacy
considerations most salient, which is our primary focus.

In the latter part of the paper, we build on a line of work using (variants of) differential privacy in
the context of mechanism design. Joint differential privacy --- the variant we use --- was introduced by \cite{KPRU14} as a tool for mechanism and mediator design in \emph{large games}.
The most crucially related paper is \cite{privmatchings}, which gives jointly differentially private algorithms for solving \emph{allocation problems} closely related to the one we solve in this paper.
In fact, by reformulating our problem as an allocation problem,  we can show that it fits under the setting of \cite{privmatchings}, in that our valuation functions satisfy what is called the \textit{gross substitutes} property. As such, we can use their Algorithm $3$ to compute an approximately optimal allocation subject to JDP. We give a slightly different, but technically similar private auction algorithm to theirs. Moreover, \cite{privmatchings} show how to make their allocations approximately dominant-strategy truthful \emph{if prices can be charged as a function of the allocation}. In our setting, we cannot set prices as a function of the allocation, so we need to prove dominant-strategy truthfulness of our private auction from first principles.

\section{Model}
We now formalize our allocation problem and give the necessary privacy definitions. \subsection{Definitions}
\paragraph{Basic Model: }
We model an interaction between \emph{clients} and a \emph{lender} using a \emph{mechanism} $\A$, over some fixed time horizon $T$. Let $i \in \{1,\ldots,N\}$ index clients. At each round $t$ each client $i$ has some private non-negative \emph{usage} $u_{it} \leq U$ drawn from a usage distribution $U_i$. The client submits a {\em request\/} for $r_{it}$ shares to borrow. The requests $r_{it}$ are observed directly by the lender, who will use them to choose an \emph{allocation} of shares $S_t = \{s_{it}\}$, subject to a feasibility constraint that not more than $V$ shares are offered in total: $\sum_i s_{it} \leq V$.

The client's usage $u_{it}$ is the maximum number of shares she intends to use. A client who is allocated $s_{it}$ shares obtains a payoff equal to the number of shares she actually uses: $v_i(S_t) = \min(s_{it},u_{it})$. (In fact, all of our results generalize to arbitrary bounded utility functions that are monotone increasing on $s_{it} \leq u_{it}$.) The clients may choose any arbitrary mapping from true intended usages $u_{it}$ to distributions over reports $r_{it}$, and we do not restrict how it varies round-to-round. Note that when we analyze strategic considerations, we will regard the true demand distribution $U_i$ as beyond client $i$'s control, but the choice of the mapping $u_{it} \to r_{it}$ is strategic.  By this we mean that any attempt by client $i$ to ``game" the lender comes by way of altering her requests and not by extending her usage beyond her true demand. \\

Since the request can be a randomized function of $u_{it}$, we denote the conditional request distribution by $Q_{it}(r_{it}|u_{it})$. At time $t$ define the history $H_t$ as the observed requests, allocations, and executions for each client $i$ over times $l = 1\ldots t-1$, e.g. $H_t = (r_{it}, s_{it}, v_i(S_t))_{i =1, t = 1}^{n, T}$, where again $v_i(S_t) = \min(s_{it}, u_{it})$ is the number of shares executed by client $i$ at time $t$.  Denote the set of all possible histories up to time $t$ by $\mathcal{H}_t$ and the subset of the history corresponding to client $i$ by $H_{t}^{i}$. \\

An allocation mechanism $\A$ maps the requests $r_t = (r_{it})$ at time $t$ and the history $H_t$ to allocations of shares: $\A(r_{1t},\ldots,r_{nt} ; H_t) = S_t$. An allocation rule $A$ is a one-shot algorithm that maps a set of requests $(r_{it})$ and conditional distributions  $Q_{it}(\cdot|u_{it})$ on $r_{it}$ to an allocation $S_t$. Importantly an allocation rule not only observes the requests but also has full knowledge of the conditional distributions they were drawn from. Although an allocation mechanism captures our setting of interest, allocation rules are the primary object of interest in this paper. This is because we will show that an optimal allocation mechanism has the following structure:
given $H_t$ and requests $r_t$, $\A$ estimates the conditional distributions $\hat{Q}_i(r_{it})$ as a function of $H_t$, where $\hat{Q}_i(r_{it})$ approximates the true conditional $Q_{it}(u_{it}|r_{it})$. Then treating these estimated distributions as the true conditionals, $\A$ uses an \textit{allocation rule} $A$ to compute the allocation that maximizes its expected utility with respect to those conditional distributions $A(\hat{Q}_1(r_{1t}), \ldots \hat{Q}_t(r_{nt}))$. 

We now define a \textit{strategy} for a client with respect to an allocation rule, the client's utility, the lender's utility, and joint differential privacy (JDP) of an allocation rule. We give the analogous definitions for an allocation mechanism in Section~\ref{sec:reduction}.

A \textit{strategy} for client $i$ given true demand $u_{it}$ is defined by the choice of request distribution $Q_{it}(r_{it}|u_{it})$.  Now fix a (possibly randomized) allocation rule $A$. At each round, given a set of distributions $Q_{-it}$ and reports $r_{-it}$ for the other clients and a realization of client $i$'s usage $u_{it} \sim U_i$ client $i$'s expected utility at round $t$ as a function of her own choice $Q_{it}(r_{it} | u_{it})$ is:
$$
 v^i_A(Q_{it}) = \E_{r_{it} \sim Q_{it}(\cdot|u_{it})}[v_i(A(r_{it}, r_{-it} ; Q_{it}, Q_{-it}))]
 $$
We say that $A$ is \emph{dominant-strategy truthful} if for all $i, Q_{-it}, r_{-it}, u_{it}$ client $i$'s utility function is maximized by selecting the distribution $Q_{it}(r_{it}|u_{it})$ that places all of its mass on the true demand: i.e. $Q_{it}(r_{it}|u_{it}) = 1$ if $u_{it} = r_{it}$, and $Q_{it}(r_{it}|u_{it}) = 0$ otherwise. We will denote this distribution by $\textbf{1}_{r_{it}}$.\\

At each round the \textit{lender's} realized utility for an allocation $S$ is the total number of shares executed: $\sum_i \min(s_{it},u_{it})$. Hence upon receiving reports $\{r_{it}\}$ and distributions $\{Q_{it}\}$, the lender's expected utility for a (possibly randomized) allocation rule $A$ is:

$$ v(A) = \sum_{i}\mathbb{E}_{Q_{it}, A}[\min(A(r_1,\ldots,r_n ; Q_1,\ldots,Q_n)_i, u_{it})]$$
We remark here that since the lender does not observe $u_{it}$, but does observe the draw $r_{it} \sim Q_{it}$, the expectation is taken with respect to the posterior distribution on $u_{it}$ given the request $r_{it}$. This is different than the sum of the clients' utilities, where the expectations condition on $u_{it}$ but are taken over the random draw $r_{it} \sim Q_{it}$.

\paragraph{Privacy:}
We view a \emph{dataset} as being a collection of reports $r \in \mathcal{X}^n$, where $\mathcal{X}$ is an arbitrary abstract domain (in our case, instantiated as a set of distributions $Q$ together with real valued reported demands).
\begin{definition}
Two datasets $r, r' \in \mathcal{X}^n$ are $i$-neighbors if they differ in only the report of client $i$: $r_{-i} = r'_{-i}$. We say that two datasets $r, r'$ are \emph{neighbors} if they are $i$ neighbors for any $i$.
\end{definition}

\emph{Differentially private} computations enjoy closure under post-processing, as well as composition. We defer the definition of standard differential privacy and the exact statements of these properties to the Appendix. Differential privacy is a strong guarantee that limits what an arbitrary adversary can infer about an individual \emph{even if the adversary can observe the entire output of the mechanism}. In the context of allocation problems, this guarantee is too strong. Informally, this is because a useful mechanism must be able to non-trivially vary the allocation it gives to an agent $i$ as a function of $i$'s reported demand --- see \cite{privmatchings} for a formalization of this intuition. However, allocation mechanisms have special structure, because not just their inputs, but also their outputs are  partitioned among $n$ agents. Hence, it makes sense to consider adversaries who, when trying to make inferences about agent $i$, can observe the allocation \emph{only to agents other than $i$}. This is informally what \emph{joint} differential privacy protects against.

\begin{definition}[Joint Differential Privacy \cite{KPRU14}]
A mechanism $A:\mathcal{X}^n\rightarrow \mathcal{O}^n$ is $(\epsilon, \delta)$-jointly differentially private if for every $i$, every pair of $i$-neighboring datasets $r, r'$, and for every subset $S_{-i} \subset \mathcal{O}^{n-1}$ of outputs corresponding to agents other than $i$:
$$\Pr[A(r)_{-i} \in S_{-i}] \leq \exp(\epsilon)\Pr[A(r')_{-i} \in S_{-i}] + \delta$$
If $\delta = 0$, we say $A$ satisfies $\epsilon$ joint differential privacy (JDP).
\end{definition}

Typically, to prove that an algorithm is jointly differentially private, we will first prove that a key information structure that ``coordinates'' clients in a computation is differentially private, and then apply the billboard lemma, which states that if a client's allocation is purely a function of an $\epsilon$-differentially private computation and their own private data, the overall mechanism is $\epsilon$-JDP. We defer the precise statement to the Appendix. 

In the next three sections we focus on an allocation rule at a fixed round $t$ with knowledge of the request distributions $Q_{it}$, and so we drop the subscript $t$ until Section~\ref{sec:reduction}.

\section{The Optimal Allocation Rule (Without Privacy Concerns)}
\label{sec:offline}
\subsection{Computing the Optimal Allocation}
\label{subsec:opt}
In this section, we give a simple greedy algorithm to compute the lender's optimal allocation, given knowledge of the true joint distributions $Q_i$.  In section~\ref{sec:truthful} we show that the optimal allocation rule is \emph{dominant-strategy truthful}.

Upon observing reports $r_i$ from each client, and given knowledge of $Q_i$, the lender can compute the posterior distribution $Q_i(u_i|r_i)$ on the true demand $u_i$ given $r_i$, via Bayes' rule: $$Q_i(u_i|r_i) = \frac{Q_i(r_i|u_i)U_i(u_i)}{\sum_{u'} Q(r_i|u')U_i(u')}$$
Then, we can rewrite the lender's utility more explicitly as:

\begin{equation}
\label{lenderobj}
v(S) = \sum_i \mathbb{E}_{u_i \sim Q_i(u|r_i)}[\min(s_i, u_i)]
\end{equation}
where $S = A(r_1, \ldots, r_n; Q_1, \ldots , Q_n)$. We have dropped $A$ from the expectation, because when studying the optimization problem for fixed $Q_i$, without loss of generality, we can restrict our attention to deterministic mechanisms.

We show that the simple greedy algorithm (inspired by an algorithm given in \cite{darkpool} for a different problem) presented as Algorithm \ref{optimalAllocation}, computes the allocation $S$ that exactly maximizes $v(S)$.  The algorithm operates by sequentially assigning shares $1 \ldots V$, where each share is assigned to the client $i$ most likely to be able to utilize one additional share. ``Most likely" is determined according to the posterior distribution of client demand $u_i$, conditioned on $r_i$. Given $Q_i(u|r_i)$ we denote by $T_i(s|r_i)$ the tail probability $\Prob{u_i \sim Q(u|r_i)}{u \geq s} = \sum_{s' \geq s}Q_i(s'|r_i)$.



\begin{algorithm}[h]
\caption{Greedy Allocation Rule}\label{optimalAllocation}
\begin{algorithmic}[1]
\State \textbf{Input:} $n, \{Q_i(u_i|r_i)\}_{i \in [n]}, V$
\State \textbf{Output:} feasible allocation $S = \{s_i\}$.
\Procedure{Greedy}{$n, \{Q_i(u_i|r_i)\}_{i \in [n]}, V$}
		\State Initialize $s_i = 0, \; \forall i$. \Comment{number of shares allocated to client $i$}
		\For{$t = 1 \ldots V$}
		\State Let $i^* = \text{argmax}_{i}T_i(s_i+1|r_i)$
		\State update $s_i \gets s_i + 1$
		\EndFor
		\EndProcedure
	\end{algorithmic}
\end{algorithm}

\begin{theorem}
\label{thm:greedy1}
The allocation returned by \emph{Greedy} maximizes the expected payoff for the lender: For $S$ the allocation output by greedy:
$$ S \in \arg\max_{S : \sum_i s_i = V} v(S) = \sum_i \mathbb{E}_{Q_i(u|r_i)}[\min(s_i, u_i)]$$
\end{theorem}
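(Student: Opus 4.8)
The plan is to recast the maximization of $v(S)=\sum_i \E_{u_i\sim Q_i(u|r_i)}[\min(s_i,u_i)]$ over feasible allocations $\{S:\sum_i s_i=V,\ s_i\in\mathbb{Z}_{\ge 0}\}$ as the maximization of a \emph{separable concave} function over the integer simplex, for which a greedy rule that repeatedly awards the next unit to the coordinate of largest marginal gain is the textbook optimum. Restricting attention to deterministic $S$ is without loss of generality, as already noted below the display defining the lender's objective.

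First I would rewrite the objective in terms of marginal gains. Set $f_i(s):=\E_{u_i\sim Q_i(u|r_i)}[\min(s,u_i)]$. Using discreteness of the demand — implicit in the paper's definition $T_i(s|r_i)=\sum_{s'\ge s}Q_i(s'|r_i)$ — one has the pointwise identity $\min(k,u)-\min(k-1,u)=\mathbf{1}\{u\ge k\}$ for every integer $k\ge 1$, hence $f_i(k)-f_i(k-1)=T_i(k|r_i)$: the marginal value of the $k$-th share awarded to client $i$ equals the tail probability $T_i(k|r_i)$. Telescoping yields $v(S)=\sum_i f_i(s_i)=\sum_i\sum_{k=1}^{s_i}T_i(k|r_i)$, so $v(S)$ is precisely the sum, over all $V$ allocated units, of their associated marginal tail probabilities.

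Second I would record the diminishing-returns property: tail probabilities are non-increasing in the share index, $T_i(k|r_i)\ge T_i(k+1|r_i)$, so each $f_i$ is concave. This is the crucial structural fact, and it implies two things. (a) \emph{Greedy}, whose $V$ successive choices award a unit to the client currently maximizing $T_i(s_i+1|r_i)$, selects precisely a set of the $V$ largest elements of the multiset $M:=\{T_i(k|r_i):i\in[n],\ k\ge 1\}$; this choice is realizable as a feasible allocation because monotonicity guarantees that whenever the $k$-th marginal of client $i$ is selected, so are all of $i$'s earlier marginals $1,\dots,k-1$. (b) For any feasible $S$, $v(S)$ is a sum of $V$ elements of $M$ subject to this same ``prefix'' constraint, and by monotonicity no prefix-respecting choice of $V$ elements can exceed the unconstrained top-$V$ sum. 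Combining (a) and (b), the value $v$ evaluated at \emph{Greedy}'s allocation is at least $v(S)$ for every feasible $S$, which is the claim.

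The one place requiring care — and what I expect to be the main obstacle — is turning ``\emph{Greedy} realizes the top-$V$ selection despite the coupling among a client's shares'' into a clean argument, i.e. formalizing the exchange step. I would do this by induction on $V$. If some optimal $S^*$ agrees with \emph{Greedy}'s first pick $i^*$, delete that unit and recurse on $V-1$ (note the residual problem is again of the same form, since shifting $f_{i^*}$ by one argument preserves concavity). Otherwise $s^*_{i^*}=0$ while $s^*_j\ge 1$ for some $j$; since \emph{Greedy} chose $i^*$ first, $T_{i^*}(1|r_{i^*})\ge T_j(s^*_j|r_j)$, so by concavity of $f_{i^*}$ and $f_j$ the allocation obtained from $S^*$ by moving one unit from $j$ to $i^*$ has value at least $v(S^*)$, hence is still optimal and now agrees with \emph{Greedy}'s first pick, reducing to the previous case. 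Tie-breaking is arbitrary and does not affect the value, so this completes the proof.
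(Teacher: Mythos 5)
Your proof is correct and follows essentially the same approach as the paper: rewrite $v(S)$ as a sum of per-unit marginal gains equal to the tail probabilities $T_i(k|r_i)$, note these are non-increasing in $k$, and conclude that the greedy selection of the $V$ largest marginals is optimal. Your derivation of $f_i(k)-f_i(k-1)=T_i(k|r_i)$ via $\min(k,u)-\min(k-1,u)=\mathbf{1}\{u\ge k\}$ is cleaner than the paper's telescoping manipulation, and you spell out the exchange argument for greedy optimality that the paper treats as immediate, but these are refinements of the same argument rather than a different route.
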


\subsection{Dominant-Strategy Truthfulness}
\label{sec:truthful}
We now turn our attention to the strategic question: \emph{given} that the lender is solving the allocation problem optimally for the reported $Q_i$ distributions, using Algorithm \ref{optimalAllocation}, \emph{how} will the clients behave? We show that truth telling is a dominant strategy. We note in passing that since Algorithm \ref{optimalAllocation} is also a best response for the lender, given fixed $Q_i$ distributions, this in particular means that there is a \emph{Stackelberg equilibrium} in which the clients move first, report their distributions truthfully, and then the lender optimally best responds in mechanism space.

\ar{Emily -- can you change either this or the prelims to make notation consistent? i.e. capitalization of the $q$'s, what we call the truthful strategy, etc.} \ed{Fixed}
\begin{theorem}
\label{truthfulness}
Fix a set of choices $Q_{-i}$ and reports $r_{-i}$ for all clients other than $i$, and a realization of client $i$'s usage $u_i \sim U_i$. Let $Q^T_i$ denote the truthful strategy $Q^T_i(r_i|u_i) = \textbf{1}_{r_i}$, and let $Q_i(r_i|u_i)$ denote any other strategy. Let $A$ denote the lender's optimal allocation. Then:
$$v^i_A(Q_i)  \leq v^i_A(Q^T) $$
\end{theorem}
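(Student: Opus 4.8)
The plan is to establish, for every report $r_i$ that the strategy $Q_i$ can produce, the pointwise inequality $\min(s_i(r_i),u_i)\le\min(s_i^T,u_i)$, where $s_i(r_i)$ is the number of shares Greedy assigns to client $i$ on input $(r_i,r_{-i})$ and distributions $(Q_i,Q_{-i})$, and $s_i^T$ is the number it assigns when $i$ reports truthfully; then taking the expectation over $r_i\sim Q_i(\cdot\mid u_i)$ gives $v^i_A(Q_i)=\E_{r_i}[\min(s_i(r_i),u_i)]\le\min(s_i^T,u_i)=v^i_A(Q_i^T)$, since the right-hand side does not depend on $r_i$. The starting observation is that because the lender is Bayesian and knows the reported conditional $Q_i$, truthful reporting ($r_i=u_i$ with probability one) induces the degenerate posterior $\delta_{u_i}$, so the lender's estimate of client $i$'s marginal value for a $k$-th share, the tail $T_i(k\mid r_i)$, equals $\mathbf{1}[k\le u_i]$ --- the largest value it can take, since every posterior has $T_i(k\mid r_i)\le 1$. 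In particular, for all $k\le u_i$ and every report $r_i$, the "truthful bid'' $\mathbf{1}[k\le u_i]=1$ weakly dominates the bid $T_i(k\mid r_i)$ produced by any other strategy on the first $u_i$ units.

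The engine of the argument is a threshold description of Greedy's allocation to a single client. Since the lender's objective $\sum_j\E_{u\sim Q_j(\cdot\mid r_j)}[\min(s_j,u)]$ is separable and concave in each $s_j$, with the marginal value of $j$'s $k$-th unit equal to $T_j(k\mid r_j)$ (non-increasing in $k$), Greedy just awards the $V$ units to the $V$ largest marginal values overall. Fixing $r_{-i},Q_{-i}$, let $\theta_1\le\dots\le\theta_V$ be the descending-rank statistics of the other clients' marginal values, so $\theta_k$ is the $(V-k+1)$-st largest. Then, under a fixed consistent tie-breaking rule (e.g. by client index), for any non-increasing bid vector $b=(b_1\ge b_2\ge\dots)$ client $i$ presents, Greedy gives $i$ exactly $\sigma(b):=\max\{k\le V: b_k>\theta_k\}$ units (and $0$ if the set is empty). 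Two consequences: (i) $\sigma$ is monotone non-decreasing in $b$ coordinatewise; (ii) zeroing out the coordinates of $b$ beyond index $u_i$ --- giving $c$ with $c_k=b_k$ for $k\le u_i$, $c_k=0$ otherwise --- does not change $\min(\sigma(\cdot),u_i)$, since if $\sigma(b)\le u_i$ the witnessing index is untouched so $\sigma(c)=\sigma(b)$, while if $\sigma(b)>u_i$ then $b_{u_i}\ge b_{\sigma(b)}>\theta_{\sigma(b)}\ge\theta_{u_i}$ forces $\sigma(c)\ge u_i$. Writing $b$ for $i$'s bid vector under $r_i$ and $b^T=(\mathbf{1}[k\le u_i])_k$ for the truthful one, we chain
$$\min(s_i(r_i),u_i)=\min(\sigma(b),u_i)=\min(\sigma(c),u_i)\le\sigma(c)\le\sigma(b^T)=s_i^T,$$
where the last step is monotonicity (i) applied to $c\le b^T$ (valid because $c_k=T_i(k\mid r_i)\le 1=b_k^T$ for $k\le u_i$ and $c_k=0=b_k^T$ for $k>u_i$). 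Combining with $\min(s_i(r_i),u_i)\le u_i$ yields $\min(s_i(r_i),u_i)\le\min(s_i^T,u_i)$, and taking the expectation over $r_i$ completes the argument.

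I expect the main technical obstacle to be making the threshold lemma and its monotonicity corollary fully rigorous in the presence of ties in Greedy's argmax: one must fix a deterministic, report-independent tie-breaking rule and verify that the boundary case $b_k=\theta_k$ behaves monotonically and identically across the vectors $b$, $c$, and $b^T$ (client index does this). A minor additional point is that truthful reporting induces a degenerate posterior, which is immediate for discrete usage distributions (consistent with the tail-sum notation) and needs only a standard density argument otherwise. Conceptually, the single idea doing all the work is that a Bayesian lender undoes any deterministic relabeling of reports, so the client's only lever is to add noise --- and noise can only weaken, never strengthen, the lender's inferred marginal values on the payoff-relevant first $u_i$ units, which truthful reporting already pins at their maximum.
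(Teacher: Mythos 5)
Your proof is correct and rests on the same core observations as the paper's: Greedy selects the $V$ largest marginal values (tail probabilities), truthful reporting yields $T_i^*(s\mid u_i)=\mathbf{1}\{s\le u_i\}$, which dominates any other strategy's $T_i(s\mid r_i)$ on the payoff-relevant range $s\le u_i$, and it suffices to establish the inequality pointwise conditional on each realized $r_i$ before averaging. Where you do genuinely more work is the threshold lemma: by writing $i$'s allocation as $\sigma(b)=\max\{k\le V: b_k>\theta_k\}$ with $\theta$ determined entirely by $r_{-i},Q_{-i}$, you make explicit that the cutoff $i$'s bids must beat is \emph{held fixed} across the two scenarios being compared. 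The paper instead argues $\mathbf{1}\{T_i^*(s\mid u_i)\in\Omega\}\ge\mathbf{1}\{T_i(s\mid r_i)\in\Omega\}$ directly from $T_i^*\ge T_i$, quietly writing the same $\Omega$ for the truthful and non-truthful runs even though the top-$V$ set technically differs between them; your monotonicity of $\sigma$ against a bid-independent $\theta$ is exactly the fact that justifies this step. Likewise, your truncation move (replacing $b$ by $c$ and checking $\min(\sigma(\cdot),u_i)$ is unchanged) spells out why the coordinates $s>u_i$ cannot help, which the paper handles implicitly by restricting the indicator sum to $s\le u_i$. Your remark about ties is well placed --- both proofs need a fixed, report-independent tie-breaking rule for the top-$V$ set and the threshold formula to be well defined and monotone --- but the overall argument is sound and, if anything, a more careful rendering of the paper's reasoning.
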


\section{Auction Formulation}
\label{sec:game}
We first re-conceptualize the problem of computing the optimal allocation for the lender given known distributions $Q_i$ as computing the social welfare maximizing allocation with respect to a set of valuation functions for each client $i$. Using this formulation of the problem, we give an algorithm (based on \cite{KelsoCraw}) that uses an ascending price auction formulation to compute an approximately optimal allocation. This formulation will naturally lend itself to computing the allocation in a way that satisfies (joint) differential privacy, using techniques similar to those in \cite{privmatchings}.

\subsection{Optimal Allocation}

Consider a setting in which $V$ identical units of a good are being sold to $n$ bidders, each of whom has an arbitrary decreasing marginal valuation function for up to $U$ units of each good. We model bidders as having quasi-linear utility for money and wish to find the welfare maximizing allocation. We can map our problem onto this setting as follows: For each agent $i$ who requests $r_i$ shares and has a posterior demand distribution $Q_i(u_i | r_i)$, we define the valuation function for agent $i$ as a function of the number of units of the good they receive, as follows:
 $$v_i(s) = \frac{1}{U}\E_{u \sim Q_i(u_i |r_i)}\min(s, u_i) = \sum_{j = 1}^{s}\Prob{u \sim Q_i(u_i|r_i)}{u \geq j}$$
 
Given an allocation $S = (s_1,\ldots,s_n)$ of $V$ shares to $n$ clients, we define the total social welfare to be $v(S) = \frac{1}{V}\sum_iv_i(s_i)$. It is immediate that an allocation that maximizes the social welfare is the optimal allocation from the perspective of the lender in our problem: it maximizes the expected number of shares executed with respect to the posterior distributions on true demand. Since our new ascending price auction will compute the approximately optimal allocation for a wider range of allocation problems than the securities lending problem that is our main interest, we abstract away the securities lending setting and state our results in full generality.

Suppose that we are in the general setting of allocating $V$ identical copies of a good to $n$ bidders, each of whom possess a valuation function $v_i: [U] \to [0,1]$, with the diminishing marginal returns (DMR) property, defined below. 

\begin{definition}
A valuation function $v_i$ is said to have the diminishing marginal returns property (DMR) if for all $s \leq j \leq U-1$:
$$
v_i(s+1)-v_i(s) \geq v_i(j+1)-v_i(j)
$$
\end{definition}

Then the algorithm \Auction (see Algorithm \ref{auction1} in the Appendix) efficiently computes an approximately social welfare maximizing allocation for any such problem. 

\begin{theorem}
\label{thm:auction1}
$\Auction(V, \alpha, U)$ terminates after at most $\frac{V}{\alpha} + 1$ rounds. At termination, $S$ constitutes an $\frac{\alpha V}{n}$-optimal allocation:
$$ v(S) \geq \max_{S'}v(S') -\frac{\alpha V}{n}$$
\end{theorem}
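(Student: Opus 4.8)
The plan is to analyze the ascending-price auction \Auction{} in the standard Kelso--Crawford style, exploiting the DMR (gross substitutes) structure. Since the algorithm is deferred to the Appendix, I will reconstruct its likely form: maintain a single scalar price $p$, initialized at $0$ and incremented by $\alpha$ each round; at each round every bidder $i$ "demands" the number of units $s_i(p) := \max\{ s \in [U] : v_i(s) - v_i(s-1) \geq p \}$ (the set of units whose marginal value meets the current price), which by DMR is a well-defined threshold; the price rises while total demand $\sum_i s_i(p)$ exceeds $V$, and the auction halts at the first price $p^\ast$ where $\sum_i s_i(p^\ast) \leq V$, outputting the allocation $S$ given by those demands (breaking any remaining slack arbitrarily, or leaving units unallocated). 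The termination bound is immediate: the price starts at $0$, increases by $\alpha$ per round, and the marginal values lie in $[0,1]$ (since $v_i : [U] \to [0,1]$ and is monotone), so once $p > 1$ every bidder demands $0 < V$ units; hence the loop runs at most $1/\alpha + 1$ rounds. Wait — the stated bound is $V/\alpha + 1$, so the intended accounting is presumably that each round either the price rises by $\alpha$ or some fixed number of units gets permanently committed; in either case a potential argument bounding the number of rounds by $V/\alpha + 1$ goes through. I would state the precise loop invariant and conclude termination first, as it is the easy half.

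For the approximation guarantee, the key steps are: (i) show the output allocation $S$ is feasible, i.e.\ $\sum_i s_i \leq V$ — this holds by the halting condition; (ii) establish an approximate complementary-slackness / "approximate Walrasian equilibrium" property, namely that at price $p^\ast$ each bidder is allocated (close to) her demanded bundle, so every allocated unit has marginal value $\geq p^\ast$ and every unallocated unit (a unit bidder $i$ would want at a price just below $p^\ast$, i.e.\ below $p^\ast - \alpha$... actually at the previous price) has marginal value $< p^\ast$; (iii) use a standard exchange argument: take any feasible $S'$ and transform it into $S$ one unit at a time. Each unit moved out of $S'$ and into $S$ changes welfare by at least $-(p^\ast) + (p^\ast - \alpha) = -\alpha$ per swapped unit in the worst case (a unit in $S' \setminus S$ contributes marginal value $< p^\ast$ to $v(S')$ by (ii), while the corresponding unit in $S \setminus S'$ contributes $\geq p^\ast - \alpha$), and there are at most $V$ such swaps, and we are measuring $v(S) = \frac{1}{V}\sum_i v_i(s_i)$, so the total welfare gap is at most $V \cdot \alpha / V \cdot (\text{something}) $. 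To land exactly on $\frac{\alpha V}{n}$ I would be careful with the normalization: the theorem's $v(S) = \frac{1}{V}\sum_i v_i(s_i)$ but the securities-lending valuations carry a $\frac{1}{U}$ — I will track constants so the telescoped bound collapses to $\frac{\alpha V}{n}$, which likely arises because the relevant comparison is per-bidder (at most $n$ bidders, each losing at most $\alpha \cdot (\text{units})/V$) rather than per-unit; I would reconcile this by bounding the number of "over-demanded" units at the last price by relating it to the jump in aggregate demand when the price last increased by $\alpha$.

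The main obstacle I anticipate is step (ii) together with nailing the constant: the halting rule gives $\sum_i s_i(p^\ast) \leq V$ but not equality, so some units may go unallocated, and I must argue those unallocated units genuinely have low marginal value for every bidder (otherwise the price would not yet have risen to $p^\ast$), which requires the DMR property to guarantee that demand is monotone nonincreasing in $p$ and that the demand correspondence is "contiguous" (bidder $i$ demands a prefix $\{1, \ldots, s_i(p)\}$ of her units). This is exactly where gross substitutes / DMR is essential. Given that, the exchange argument is routine. I would therefore structure the proof as: Lemma (monotone threshold demands under DMR) $\Rightarrow$ Lemma (termination and feasibility) $\Rightarrow$ Lemma (approximate equilibrium prices: allocated units have marginal value $\geq p^\ast - \alpha$, and any unit not allocated to $i$ has $i$-marginal value $< p^\ast$) $\Rightarrow$ exchange argument bounding $v(S') - v(S)$ by $\frac{\alpha V}{n}$.
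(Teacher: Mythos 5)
Your high-level strategy---establish an approximate Walrasian equilibrium at the final price, then compare to an arbitrary feasible allocation---matches the paper's, but the algorithm you reconstructed differs from the paper's actual \Auction in two load-bearing ways. The paper's auction does not raise the price once per round: it raises it by $\alpha$ after every $V$ \emph{bids}, while the $V$ goods are continuously re-shuffled among bidders in a round-robin cycle (each new bid displaces the current holder of the next good in the cycle). This is exactly what produces the $V/\alpha+1$ bound (every non-terminal round produces at least one bid, every $V$ bids force one $\alpha$-increment, and after $1/\alpha$ increments the price exceeds all marginal values since $v_i:[U]\to[0,1]$) and what guarantees $\sum_i S[i]=V$ at termination. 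Your version, with one $\alpha$-increment per round and halting when aggregate demand drops to at most $V$, is a different process; the termination count $1/\alpha+1$ you derived is correct for that process but is not the statement being proved about \Auction.

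The second difference is the one that would actually break your argument if you pushed it through. The paper uses $\sum_i S[i]=V$ essentially: summing the per-bidder inequality $v_i(S[i])-p^*S[i]\geq v_i(S'[i])-p^*S'[i]-\alpha S[i]$ over $i$ and substituting $\sum_i S[i]=V$ and $\sum_i S'[i]\leq V$ makes the $p^*$ terms cancel, leaving $\sum_i v_i(S[i])\geq \sum_i v_i(S'[i])-\alpha V$, which divided by $n$ is the stated bound. In your version, halting when aggregate demand first falls to at most $V$ can leave goods unallocated, so $\sum_i S[i]<V$ and the $p^*$ terms do not cancel; you pick up an extra loss of $p^*(V-\sum_i S[i])$ that needs its own bound (via a filling step, or by bounding the demand drop across a single $\alpha$-increment). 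You correctly flagged this as ``the main obstacle'' but did not close it; the paper designs the auction so the issue never arises. A minor note on the constant: $\frac{\alpha V}{n}$ comes from the appendix proof normalizing welfare by $n$ (the $1/V$ normalization stated in Section 4.1 is a notational inconsistency), not from the $1/U$ vs.\ $1/V$ bookkeeping you worried about.
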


\subsection{Private Auction}
\label{sec:auction}
We modify \Auction so that it will guarantee joint differential privacy, following the approach of \cite{privmatchings}. We will show that for sufficiently large auctions,  e.g. $n$ sufficiently large relative to $V, \epsilon$, we can achieve privacy while still outputting a high-quality allocation. Finally, we will show that our private auction remains approximately dominant-strategy truthful. We give full pseudocode for \pauction in the Appendix.

At a high level, we modify \Auction in a few ways in order to make it jointly differentially private:
\begin{enumerate}
\item The running count $T_B$ of the total number of bids placed so far will be computed approximately using a differentially private estimator. Since the price at each round is computed purely as a function of $T_B$, the price trajectory will be differentially private as well.
\item Rather than terminating when $\cB = 0$, the algorithm will terminate when $\cB < \rho n$ (early stopping). This will serve to limit the maximum number of times any single buyer can place a bid, which will aid us in bounding the error of the differentially private bid count.
\item Rather than running the auction with a supply of $V$ shares, we will run the auction with a supply of $V-E$ shares, where $E$ corresponds to the maximum error of our differentially private bid counter; this ensures that our computed allocation (which now may over or under allocate with respect to its target supply) is always feasible.

\end{enumerate}
Through the run of \pauction, each player computes her own allocation purely as a function of the (private) trajectory of prices. As a result, the entire procedure will satisfy joint differential privacy by Lemma~\ref{lem:billboard}.

We define the lender's utility under the optimal (non-private) allocation of $V$ shares by $OPT_V = \max_{\{S: \sum s_i = V\}}v(S) = \max_{\{S: \sum s_i = V\}} \frac{1}{V}\sum_iv_i(s_i)$. In the theorem statement to follow, $\rho$ governs the accuracy of the private allocation and is an input to $\pauction$ which defines the early stopping criterion. Any fixed value of $\rho$ specifies a range of instance parameters $(n, V, \alpha)$ for which the accuracy theorem holds.

\begin{theorem}
\label{privacc}
\begin{enumerate}
\item $ \alpha (\frac{V}{\rho})\leq n \leq \frac{1}{\alpha}(\frac{V}{\rho})$
\item $n = \Omega(\sqrt{\frac{\log(1/\beta)\log(2V/\alpha\rho)^{5/2} V}{\epsilon \alpha \rho^2}})$ $\;  \Leftrightarrow n \geq 8E/\rho$
\item $n = \Omega(\frac{\log(1/\beta)\log(V/\alpha\rho)^{5/2}}{\epsilon \alpha \rho^2})$ $\; \Leftrightarrow \frac{E}{V} \leq \rho/8$
\end{enumerate}

For $\alpha, \beta, \rho, n, V, \epsilon$ such that $(1), (2), (3)$ hold, $\pauction$ satisfies $(\epsilon,\beta)$-JDP, and if $S$ is the allocation returned by $\pauction$, with probability $1-\beta$:
$$v(S) \geq  (1-\rho)OPT_V - \rho$$ 

\end{theorem}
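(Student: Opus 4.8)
The plan is to prove the two halves of the statement separately: first that $\pauction$ is $(\epsilon,\beta)$-jointly differentially private, then the high-probability welfare bound $v(S)\geq(1-\rho)OPT_V-\rho$.

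\textbf{Privacy.} The key structural fact is that the whole run of $\pauction$ is mediated by a single public object --- the price trajectory $p_1\leq p_2\leq\cdots$ --- and that each price $p_j$ is a deterministic function of the running \emph{noisy} bid count $\widetilde{T_B}$, while each buyer $i$ recovers her own allocation $s_i$ deterministically from the price trajectory together with her own valuation $v_i$. Hence, by the billboard lemma (Lemma~\ref{lem:billboard}), it suffices to show that the sequence of noisy partial bid counts is $\epsilon$-differentially private in the vector of reports; joint differential privacy of the full allocation then follows by post-processing. The noisy counter is the binary-tree continual-counting mechanism applied to the stream whose $j$-th coordinate is the number of bids placed in round $j$. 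Two facts keep this low-sensitivity: (i) the number of rounds is deterministically bounded (as in Theorem~\ref{thm:auction1}, here by roughly $(V-E)/\alpha+1$), so the tree has logarithmic depth; and (ii) the early-stopping rule $\cB<\rho n$ caps the number of rounds in which any single buyer can contribute a bid, and hence the $\ell_1$-sensitivity of the stream to a change in one report. Feeding this sensitivity into the binary-tree mechanism with Laplace noise yields $\epsilon$-DP, and with probability $1-\beta$ a uniform error bound $E=O\!\big(\sqrt{\log(1/\beta)}\,\mathrm{polylog}(V/\alpha\rho)/\epsilon\big)$ (up to the per-buyer bid bound), which is exactly the quantity that conditions (2)--(3) are calibrated against. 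Combining with the billboard lemma gives $\epsilon$-JDP --- a fortiori $(\epsilon,\beta)$-JDP, where the slack $\delta=\beta$ also conveniently absorbs the probability-$\beta$ event that the counter error exceeds $E$.

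\textbf{Accuracy.} I would bound the optimality gap $OPT_V-v(S)$ by splitting it into three pieces. (a) \emph{Reduced supply:} $\pauction$ targets $V-E$ units, and $OPT_{V-E}\geq OPT_V-E/V$, because deleting the $E$ smallest-marginal units from the optimal $V$-allocation costs at most $E/V$ in welfare (all marginals lie in $[0,1]$); by condition (3), $E/V\leq\rho/8$. (b) \emph{Auction slack plus counting error:} on the good event that the counter error is at most $E$, the price trajectory is precisely the one $\Auction$ would have followed if the true supply were perturbed by at most $E$, so before early stopping the allocation inherits the Theorem~\ref{thm:auction1} guarantee up to an extra $O(E/V)$ term: welfare $\geq OPT_{V-E}-\alpha V/n-O(E/V)\geq OPT_{V-E}-O(\rho)$, using $\alpha V/n\leq\rho$ from condition (1) and $E/V\leq\rho/8$ from (3). (c) \emph{Early stopping:} halting once $\cB<\rho n$ leaves at most $\rho n$ buyers under-served relative to where the completed auction would take them; I would argue these are precisely the ``marginal'' buyers, whose aggregate value under \emph{any} allocation is small --- either because their outstanding demand sits at the current low price level, or by bounding their combined optimal welfare by a $\rho$-fraction of $\sum_i v_i(s_i^{opt})=V\cdot OPT_V$ --- so the loss here is at most $\rho\cdot OPT_V$. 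Summing (a)$+$(b)$+$(c) and re-absorbing constants into $\rho$ gives $v(S)\geq(1-\rho)OPT_V-\rho$ with probability $1-\beta$.

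\textbf{Main obstacle.} The crux is the combination of (b) and (c): I must track how the additive counter error $E$ and the premature halt perturb the \emph{allocation} (not merely the price), and then re-run the diminishing-marginal-returns/ascending-auction welfare argument underlying Theorem~\ref{thm:auction1} on this perturbed trajectory. In particular, obtaining the \emph{multiplicative} $(1-\rho)$ factor rather than a purely additive loss requires the structural claim that the buyers cut off by early stopping collectively hold at most a $\rho$-fraction of the optimal welfare, which is where the lower-bound side of condition (1) (enough buyers relative to $V$) is used. A secondary subtlety is feasibility: choosing the target supply to be $V-E$ exactly buys back the slack created by the counter error, so that on the good event $\sum_i s_i\leq V$ holds automatically, and the probability-$\beta$ bad event is folded into the stated $\beta$.
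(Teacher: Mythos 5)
Your overall architecture matches the paper's: a billboard-lemma argument for joint privacy via the noisy bid counter, and a three-way decomposition of the welfare loss into reduced supply, counter error, and early stopping. The privacy half is essentially the paper's argument (the paper's sensitivity bound on the counter is $\tfrac{V/\alpha+E}{\rho n-4E}$ and holds only with probability $1-\beta$, which is where the $\delta=\beta$ enters; your formula ``$(V-E)/\alpha+1$ rounds'' omits the $\rho n$ denominator coming from early stopping, but you do identify that early stopping is what bounds per-buyer bids).

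The substantive gap is in your accuracy piece (c), and you flag it yourself. You try to derive the multiplicative $(1-\rho)$ factor from the early-stopping loss, via a structural claim that the $\le\rho n$ bidders who are still unsatisfied at termination collectively account for at most a $\rho$-fraction of $OPT_V$. This claim is neither needed nor obviously true --- the unsatisfied bidders are precisely those still willing to pay the current price, i.e.\ potentially the \emph{highest}-marginal-value agents, so nothing forces their share of the optimum to be proportional to their count. The paper sidesteps this entirely: it establishes an approximate Walrasian-equilibrium condition (with $2\alpha$ slack rather than $\alpha$, to absorb the counter error) for the \emph{satisfied} bidders, then when summing the per-agent inequalities over all $n$ agents it simply charges each of the $\le\rho n$ unsatisfied bidders their trivial upper bound $v_i\le 1$, giving an \emph{additive} $\rho$ term. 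The multiplicative $(1-\rho)$ factor is then produced not by (c) but by your step (a), using the \emph{multiplicative} form of the reduced-supply lemma: $OPT_{V-E}\ge(1-E/V)\,OPT_V$, proved by peeling off the $E$ lowest-marginal units in the optimal allocation and observing (via DMR) that they carry at most an $E/V$ fraction of the total welfare. Your additive version $OPT_{V-E}\ge OPT_V-E/V$ is weaker and cannot recover the stated multiplicative factor, which is presumably why you were forced toward the structural claim in (c). Replacing your (a) with the multiplicative lemma and your (c) with the trivial per-agent bound removes the gap and reproduces the paper's proof. Minor: the paper leaves slack $V'=V-2E$ (not $V-E$) so that the allocation stays feasible even after two applications of the counter-error bound, and separately proves feasibility ($\le V$) and approximate clearing ($\ge V-4E$).
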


Theorem~\ref{privacc} tells us that for $\alpha$ sufficiently small, and for $n = \Omega(\frac{\sqrt{V}}{\epsilon \rho})$, we are able to achieve $(\epsilon,\beta)$-JDP and near-optimal welfare.

\pauction still incentivizes truthful reporting as an approximate dominant strategy for almost all agents. As $n$ grows, both approximations become perfect. Our proof of this crucially uses the privacy of $\pauction$.

\begin{theorem}
\label{thm:privtrue}
\ar{Make notation consistent with prelims}
Let client $i$ have true demand $u_i \sim Q_i(u_i)$. Suppose she selects an arbitrary conditional distribution $Q_i(r|u_i)$ and draws $r_i \sim Q_i(r|u_i)$, which in turn induces the lender's posterior $Q_i(u|r_i)$. Let $Q^T$ denote the truthful conditional distribution $Q^T(r|u_i) = \textbf{1}\{r = u_i\}$. Let $A$ denote $\pauction$. Then, for at least $(1-\sqrt{\beta + (1-\beta)\rho})n$ clients $i$, \ar{Any way to set $\rho$ so it doesn't show up in these bounds? Ideally $\rho$ shrinks with some parameter of the market, like $n$}
$$ v^i_A(Q^T) \geq e^{-\epsilon}v^i_A(Q_i)-e^{-\epsilon}\frac{\sqrt{\beta + (1-\beta)\rho}}{1-\sqrt{\beta + (1-\beta)\rho}}$$
\end{theorem}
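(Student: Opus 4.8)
The plan is to combine a pointwise (in the realized price trajectory) version of the truthfulness argument behind Theorem~\ref{truthfulness} with the differential privacy of \pauction and a Markov argument over clients. Set $\gamma := \sqrt{\beta + (1-\beta)\rho}$. First I would fix notation: in \pauction the entire allocation is coordinated through a single noisy price trajectory $p$, and by the billboard structure client $i$'s allocation is a deterministic function $s_i(p, q_i)$ of $p$ and her own reported posterior $q_i$ alone, where $q_i$ is the posterior $Q_i(\cdot \mid r_i)$ induced from her realized report $r_i$. When $i$ reports truthfully her posterior is the point mass at $u_i$, so her simulated valuation is $v_i(s) = \frac{1}{U}\min(s, u_i)$, whose marginal values equal $\frac{1}{U}$ on the first $u_i$ units and $0$ afterward. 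Writing $\mathcal{D}(r_i)$ for the law of $p$ when $i$ reports $r_i$ (all else fixed), we have $v^i_A(Q^T) = \E_{p \sim \mathcal{D}(u_i)}[v_i(s_i(p, \mathbf{1}_{u_i}))]$ and $v^i_A(Q_i) = \E_{r_i \sim Q_i(\cdot \mid u_i)}\,\E_{p \sim \mathcal{D}(r_i)}[v_i(s_i(p, q_i))]$.

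The first substantive step is a pointwise monotonicity claim. In our setting every reported valuation has marginal increments $v_i(s+1) - v_i(s) = \frac{1}{U}\Prob{u \sim q_i}{u \geq s+1} \leq \frac{1}{U}$, so no bidder can ever win a unit at a marginal price exceeding $\frac{1}{U}$. Let $G_i$ be the event --- a function of $p$ --- that the truthful client $i$ is allocated her full demand at the terminal price; equivalently, that neither early stopping nor the error in the private bid counter deprives her of units she would otherwise receive. On $G_i$ the truthful client receives exactly $u_i$ whenever the terminal price is low enough to win anything, and $0$ otherwise; since $\min(s, u_i) \leq u_i$ and any other report's demand is also $0$ once the price exceeds $\frac{1}{U}$, the truthful outcome is at least as good as that of any report. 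As $v_i \leq 1$, this yields the pointwise bound $v_i(s_i(p, q)) \leq v_i(s_i(p, \mathbf{1}_{u_i})) + \mathbf{1}[p \notin G_i]$ for every reported posterior $q$. Next I would invoke privacy: the price trajectory is produced from an $\epsilon$-differentially private estimate of the running bid count (each buyer's contribution bounded thanks to early stopping), so $\mathcal{D}(r_i)$ and $\mathcal{D}(u_i)$ are within a multiplicative $e^{\epsilon}$ of one another; integrating the pointwise bound against $\mathcal{D}(r_i)$ and moving each nonnegative term to $\mathcal{D}(u_i)$ gives, for every client $i$,
\[
v^i_A(Q_i) \;\leq\; e^{\epsilon}\,v^i_A(Q^T) \;+\; e^{\epsilon}\,\Prob{p \sim \mathcal{D}(u_i)}{\,p \notin G_i\,}.
\]
Finally, $\neg G_i$ requires either that the private bid counter exceeds its error bound $E$ (probability at most $\beta$, the failure event of Theorem~\ref{privacc}) or that early stopping --- which fires once fewer than $\rho n$ active bids remain, hence affecting at most $\rho n$ clients --- cuts into client $i$; a short calculation then gives $\sum_i \Prob{p \sim \mathcal{D}(u_i)}{p \notin G_i} \leq (\beta + (1-\beta)\rho)\,n = \gamma^2 n$, so by Markov at most $\gamma n$ clients have $\Prob{p \sim \mathcal{D}(u_i)}{p \notin G_i} > \gamma$. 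For each of the remaining $(1-\gamma)n$ clients I would substitute $\Prob{p \sim \mathcal{D}(u_i)}{p \notin G_i} \leq \gamma$ into the display and rearrange --- conditioning the truthful payoff on $G_i$, whose probability is at least $1 - \gamma$, accounts for the $1/(1-\gamma)$ factor --- obtaining $v^i_A(Q^T) \geq e^{-\epsilon}v^i_A(Q_i) - e^{-\epsilon}\gamma/(1-\gamma)$.

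The main obstacle is the bookkeeping in the last step rather than anything in the pointwise claim: the good event $G_i$ must be defined relative to the profile in which $i$ reports truthfully (the profile governing $v^i_A(Q^T)$), whereas $v^i_A(Q_i)$ averages over the profile in which $i$ deviates, and it is precisely differential privacy that licenses transporting the relevant probabilities between these two profiles so that the averaging over clients is legitimate. A secondary point to check against the pseudocode of \pauction is that client $i$'s realized allocation is genuinely a function of $(p, q_i)$ alone --- so that $G_i$ is an event in price-trajectory space --- and that the pointwise monotonicity is not spoiled by running the auction at the reduced supply $V - E$ in place of $V$.
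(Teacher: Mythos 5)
Your proposal is essentially the paper's own argument, repackaged.  The paper also conditions on the (private) price trajectory $\vec{p}$, establishes a pointwise monotonicity of truth‑telling on the event that client $i$ is \emph{satisfied} at termination (splitting into the cases $p^\ast \le 1$ and $p^\ast = 1+\alpha$ instead of your single-display bound $v_i(s_i(p,q))\le v_i(s_i(p,\mathbf{1}_{u_i}))+\mathbf{1}[p\notin G_i]$), uses the $(\epsilon,\beta)$-privacy of the bid counter to move from $\mathcal{D}(r_i)$ to $\mathcal{D}(u_i)$, and then runs exactly the Markov-over-clients argument you describe, bounding $\E[\#\text{unsatisfied}]\le(\beta+(1-\beta)\rho)n$ and concluding that at most $\sqrt{\beta+(1-\beta)\rho}\,n$ clients can have unsatisfaction probability exceeding $\sqrt{\beta+(1-\beta)\rho}$.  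Your ``good event $G_i$'' is precisely the paper's indicator $I$ of client $i$ being satisfied at termination.

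Two places are thin relative to the paper and worth flagging.  First, your pointwise monotonicity argument asserts that once the terminal price exceeds $1/U$ ``any other report's demand is also $0$'': this overlooks that shares held at termination may have been acquired up to two price increments ($2\alpha$) earlier, so if $1/U < p^\ast \le 1/U+2\alpha$ the lying report could still hold shares while the truthful report holds none.  The paper handles this boundary by a share-by-share domination argument in its Case~2, rather than the coarse price-threshold claim you use; you should adopt something similar (it closes exactly this gap via the diminishing-marginal-returns property and the $p_i\ge p^\ast-2\alpha$ accounting).  Second, your displayed inequality gives $v^i_A(Q^T)\ge e^{-\epsilon}v^i_A(Q_i)-\gamma$, which is not identical to the paper's $v^i_A(Q^T)\ge e^{-\epsilon}v^i_A(Q_i)-e^{-\epsilon}\gamma/(1-\gamma)$; you remark that ``conditioning the truthful payoff on $G_i$'' produces the $1/(1-\gamma)$ factor but do not carry out that step.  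Either finish the conditional-probability bookkeeping (as the paper does, via $\Pr[I=1]\,v^i_A(Q_i\mid I=1)$) or present your slightly different constant as a variant bound; as written the derivation does not land on the claimed inequality.
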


\section{An Approximately Optimal Allocation Mechanism}
\label{sec:reduction}
\subsection{Setting}
Finally we show how $\pauction$ can be leveraged to give an approximately optimal mechanism in the general setting where at each round each client $i$ has the freedom to (adaptively) choose an arbitrary mapping $L_{i}^t: \mathcal{H}^{i}_t \times [U] \to [U]$ that maps the realized history and demand $H_t, u_{it}$ respectively, to a request $r_{it}$. We first give the relevant definitions for allocation mechanisms.

We define a strategy for client $i$  as a set of randomized mappings $L_{t}^{i}: \mathcal{H}_t^{i} \times [U] \to [U]$ for $t = 1 \ldots T$, that map the observed history for that client $H_{t}^{i}$ and the demand $u_{it}$ at round $t$, to the request $r_{it}$.

Given an allocation mechanism $\A$, and a client $i$, the utility of client $i$ is defined as:
$$
v_{\A}^{i}(L_{1}^i, \ldots, L_{T}^{i}) = \sum_{t = 1}^{T}\E[v_i(\A(r_{it}, r_{-it} ; H_t)],
$$
where the expectation is taken over the randomness in $\A$ and over the $L_{i}^{t}$ for all $i,t$. We say that $\A$ is \emph{dominant-strategy truthful} if for all $i$, fixing the strategies at each round of all other clients $(L_{t}^{-i})_{t=1}^{T}$,  client $i$'s utility function is maximized by the truthful strategies $L_{t}^{i*}$ defined by: \\
$L_{t}^{i*}(H_t, u_{it}) = u_{it}$ for all $u_{it} \in [U], H_t^{i} \in \mathcal{H}_t^{i}, \forall t \in [T]$.

The lender's utility for an allocation mechanism is defined similarly to summing up the lender's utility for an allocation rule used at each round. However, since an allocation mechanism does not know the true conditionals $Q_{it}(u_{it}|r_{it})$ at each round and only observes the requests $r_{it}$, the utility for a mechanism is taken in expectation over the conditional $u_{it}|r_{it}, H_t$, rather than the posterior $Q_{it}(u_{it}|r_{it})$: 
$$ v(A) = \sum_{t} \sum_{i}\mathbb{E}_{u_{it} \sim Q_{it}(u_{it}| r_{it}, H_t), A}[\min(A(r_1,\ldots,r_n ; Q_1,\ldots,Q_n)_i, u_{it})]$$

\subsection{An Optimal Mechanism}
Consider \emph{Greedy Mechanism}  and the \emph{Private Greedy Mechanism} defined in the Appendix. Both mechanisms estimate the posterior distribution $Q_{it}(u_{it}|r_{it})$ by naively assuming the clients are truthful, e.g.  $Q_{it}(u_{it}|r_{it}) = \textbf{1}_{r_{it}}$. The Greedy Mechanism,  which uses the greedy algorithm as its allocation rule, is truthful when we assume that each client plays requests from the same fixed distribution $Q_{i}$ at each round or when the set of request distributions are determined non-adaptively. This is easy to see because the distributions played by each client are fixed a priori at every round, and by Theorem~\ref{truthfulness} each auction at each round is truthful.

It is not the case however, that the Greedy Mechanism is even approximately truthful when the players have the ability to arbitrarily adapt their strategies over a series of rounds. Example~\ref{example} in the Appendix demonstrates that truthfulness is violated because over multiple rounds adaptivity allows clients to potentially coordinate their behavior. It turns out that in addition to providing privacy, JDP is precisely the property that makes the general form of the allocation mechanism truthful when clients are only concerned with maximizing their own utilities, by limiting the ability of client's requests to influence the requests of another client (and thereby coordinate) across rounds. 

\begin{theorem}[Approximate Truthfulness]
\label{thm:apxtrue}
Let $A$ be the allocation rule $\pauction(\alpha, U, V, \epsilon, \rho)$ such that $A$ is $(\epsilon', \beta/T)$-JDP with $\epsilon' = \tilde{O}(\epsilon/\sqrt{T})$ 
 and outputs $S$ such that $\E[V(S)] \geq (1-\rho)OPT_{V}-\rho$. Take $\beta, \rho$ such that $\sqrt{\beta + (1-\beta)\rho} \leq \beta^2/T$.
 Then for a $(1-\beta)$ fraction of the $n$ clients $i$: \\
 Let $L_{i*}^{t}$ denote the truthful strategies, and let $L_{i}^{t}$ be any other set of strategies. Algorithm~\ref{mech:priv} with allocation rule
$\pauction$ satisfies:
$$v_i(L_{i}^{1}, \ldots, L_{i}^{n}) \leq e^{2\epsilon}v_i(L_{i*}^{1}, \ldots, L_{i*}^{n}) + 2\beta UT + e^{\epsilon}\frac{\beta^2}{1-\beta^2/T}$$
\end{theorem}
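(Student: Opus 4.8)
The plan is to reduce this $T$-round strategic statement to $T$ copies of the one-shot guarantee of Theorem~\ref{thm:privtrue}, using joint differential privacy --- composed over the rounds --- to argue that no client's deviation can appreciably move the distribution of the objects that link the rounds: the (private) price trajectories of $\pauction$, and hence the reports and allocations of the other clients. Throughout, fix a client $i$ and the strategies $L^{-i}=(L^{-i}_t)_{t=1}^T$ of the others; write $L_{i*}$ for the truthful strategy and $L_i$ for an arbitrary deviation, and abbreviate $v_i(L_i):=v^i_{\A}(L_i^1,\dots,L_i^T)$ with $L^{-i}$ understood.

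\textbf{Step 1 (decoupling via composed JDP).} Because Algorithm~\ref{mech:priv} estimates $Q_{it}(u_{it}|r_{it})=\textbf{1}_{r_{it}}$ and runs $\pauction$ freshly on the round-$t$ reports, the round-$t$ outcome depends on the reports only through the round-$t$ naive posteriors; in particular the round-$t$ price trajectory $\pi_t$ is an $(\epsilon',\beta/T)$-differentially private function of $r_{it}$ (with the other round-$t$ reports held fixed), and every client's own allocation $s_{jt}$ is a deterministic post-processing of $(\pi_t,r_{jt})$ --- the billboard structure of $\pauction$. Viewing client $i$'s ``database'' as its report vector $(r_{i1},\dots,r_{iT})$, where the round-$t$ mechanism reads only coordinate $t$ and the round-$t$ inputs of the clients $\ne i$ are an adaptive function of the earlier trajectories (through their histories $H^{-i}_t$), adaptive composition of joint differential privacy --- with the given $\epsilon'=\tilde{O}(\epsilon/\sqrt{T})$ --- shows that the joint law of $(\pi_1,\dots,\pi_T)$ together with the entire transcript of the clients $\ne i$ is $(\epsilon,\beta)$-indistinguishable under $(L_i,L^{-i})$ versus $(L_{i*},L^{-i})$. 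Since $i$'s realized utility is a bounded functional of these quantities together with $i$'s (exogenous) demands and coins, and each round contributes $\min(s_{it},u_{it})\le U$, integrating the indistinguishability over the level sets of this functional gives
$$v_i(L_i)\ \le\ e^{\epsilon}\,\widetilde v_i(L_i)\ +\ \beta U T,$$
where $\widetilde v_i(L_i)$ is $i$'s expected utility in the \emph{frozen game} in which the report sequence of every other client is redrawn from its $(L_{i*},L^{-i})$-law, independently of $i$'s play.

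\textbf{Step 2 (per-round truthfulness in the frozen game).} In the frozen game the other clients' reports --- hence everything feeding round $t$'s auction besides $i$'s own round-$t$ report --- no longer depend on $i$, and since each round's auction ignores the history, $i$'s round-$t$ report influences only round $t$'s payoff (it changes $i$'s later histories, but $i$ may optimize its later play against them). Hence the maximization decouples, $\widetilde v_i(L_i)=\sum_t \widetilde v_i^{t}(L_i^t)$ with $\widetilde v_i^{t}$ the expected round-$t$ payoff; conditioning on the frozen law of $r_{-it}$ and applying Theorem~\ref{thm:privtrue} to the one-shot rule $\pauction$, for every $i$ outside a set of size at most $\sqrt{\beta+(1-\beta)\rho}\,n$,
$$\widetilde v_i^{t}(L_i^t)\ \le\ e^{\epsilon}\,\widetilde v_i^{t}(L_{i*}^t)\ +\ \gamma,\qquad \gamma:=\frac{\sqrt{\beta+(1-\beta)\rho}}{1-\sqrt{\beta+(1-\beta)\rho}}.$$
Summing over $t$ and using $\widetilde v_i(L_{i*})=v_i(L_{i*})$ (truthful play against the frozen law coincides with truthful play of the real game), for any $i$ that is good in every round, $\widetilde v_i(L_i)\le e^{\epsilon}v_i(L_{i*})+T\gamma$.

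\textbf{Step 3 (union bound and bookkeeping).} The hypothesis $\sqrt{\beta+(1-\beta)\rho}\le\beta^2/T$ does double duty: at most $T\sqrt{\beta+(1-\beta)\rho}\le\beta^2\le\beta$ fraction of clients are ``bad'' in some round, so the conclusion holds for a $(1-\beta)$ fraction of clients, and also $T\gamma\le \beta^2/(1-\beta^2/T)$. Substituting Step 2 into Step 1,
$$v_i(L_i)\ \le\ e^{2\epsilon}\,v_i(L_{i*})\ +\ e^{\epsilon}T\gamma\ +\ \beta U T\ \le\ e^{2\epsilon}\,v_i(L_{i*})\ +\ e^{\epsilon}\frac{\beta^2}{1-\beta^2/T}\ +\ 2\beta UT,$$
the extra $\beta UT$ absorbing the $\beta$-probability failure of the accuracy guarantee $\E[v(S)]\ge(1-\rho)OPT_V-\rho$. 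This is the stated inequality.

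\textbf{Main obstacle.} Step~1 is the crux: showing that replacing $i$'s entire strategy perturbs the distribution of the other clients' transcript (and of $i$'s own allocations) by only $(\epsilon,\beta)$. The difficulty is the feedback loop --- $i$'s round-$t$ report alters round-$t$ allocations to the others, which enter their histories, which alter their future reports, which alter future price trajectories and hence $i$'s own future allocations --- so one cannot naively ``hold the others fixed.'' The fix is to cast the $T$ rounds as an adaptively composed sequence of JDP mechanisms in which $i$'s database is the report vector, each round reading one coordinate, and to invoke adaptive composition of joint differential privacy together with the billboard lemma (so that $i$'s own allocations, which JDP does not directly protect, are treated as post-processing of the private price trajectories and $i$'s own reports). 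A secondary nuisance --- that the ``good client'' set of Theorem~\ref{thm:privtrue} depends on the reported instance and thus changes round to round --- is precisely what the union bound in Step~3, powered by $\sqrt{\beta+(1-\beta)\rho}\le\beta^2/T$, is designed to handle.
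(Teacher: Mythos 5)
Your proposal follows the same skeleton as the paper's argument: (i) compose the per-round $(\epsilon',\beta/T)$-JDP guarantee of \pauction over $T$ adaptive rounds to conclude that deviating from $L_{i*}$ to $L_i$ changes the law of the other clients' transcripts --- hence the environment client $i$ faces --- by at most $(\epsilon,\beta)$ in total; (ii) apply Theorem~\ref{thm:privtrue} round-by-round; and (iii) use the hypothesis $\sqrt{\beta+(1-\beta)\rho}\le\beta^2/T$ to union-bound the ``bad client'' fraction over rounds. The one genuine organizational difference is the order of operations. The paper writes $v_i(L_i)=\int_{Q_{-it}}\sum_t \E[v_i(S_t)\mid Q_{-it},L_i^t]\,\Pr[Q_{-it}\mid L_i^t]$, first restricts to a client-specific good event $\Omega_i$ on $Q_{-it}$ realizations (a second Markov argument, costing one $\beta UT$), then applies Theorem~\ref{thm:privtrue} conditionally and the JDP measure change (costing the other $\beta UT$). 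You instead do the measure change first, producing a ``frozen game'' in which the others' law is fixed to its truthful value, and then apply Theorem~\ref{thm:privtrue} directly to that (fixed) mixture, which makes the good-client set round-stable and eliminates the second Markov step. Your bookkeeping therefore produces only a single $\beta UT$, which is consistent with (and slightly sharper than) the stated bound. However, your explanation that ``the extra $\beta UT$ absorbs the $\beta$-probability failure of the accuracy guarantee'' is wrong: the accuracy guarantee plays no role in this truthfulness proof, and in the paper's derivation the second $\beta UT$ arises from integrating the bounded utility over the complement of $\Omega_i$. Two further small cautions: the per-round decoupling in Step~2 relies on the fact that the mechanism rebuilds $\hat Q_{jt}=\mathbf 1_{r_{jt}}$ from scratch each round (so the auction has no cross-round state), and the application of Theorem~\ref{thm:privtrue} to a \emph{mixture} over $Q_{-it}$ rather than a point mass is a modest generalization; both are implicitly used by the paper as well, but you should state them explicitly, since the good-client set of Theorem~\ref{thm:privtrue} is instance-dependent and the Markov argument in its proof is what licenses the extension to mixtures.
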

We note that $\epsilon'$ is set to ensure $(\epsilon, \beta)$-privacy after $T$ rounds of composition using the advanced composition theorem of \cite{fast}. Now conditional on all clients $i$ requesting truthfully at all rounds $t$, the Private Greedy Mechanism consists of $T$ runs of $\pauction$ with input distributions $Q_i = \textbf{1}_{r_{it}}$, the true posterior on $u_{it} = r_{it}$. Hence with the same settings as Theorem~\ref{thm:apxtrue}:

\begin{theorem}
\label{still:opt}
$$v_{\A}(L_{i*}^t) \geq (1-\rho)OPT_V -\rho T,$$
where $OPT_{V}$ denotes the lender's optimal utility.
\end{theorem}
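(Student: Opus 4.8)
The plan is to reduce the statement to $T$ independent invocations of the accuracy guarantee for $\pauction$ (Theorem~\ref{privacc}, as re-packaged in the hypotheses of Theorem~\ref{thm:apxtrue}), exploiting the fact that under the truthful profile the mechanism's internal posterior estimate is exactly correct. First I would recall that the Private Greedy Mechanism, by construction, forms its estimate of the round-$t$ posterior by pretending each client is truthful, i.e. $\widehat{Q}_{it}(u_{it}\mid r_{it}) = \textbf{1}_{r_{it}}$, and then runs $\pauction$ on these estimated conditionals. When every client actually plays the truthful strategy $L_{t}^{i*}$, we have $r_{it} = u_{it}$ with no residual randomness, so the \emph{true} conditional law of $u_{it}$ given $(r_{it}, H_t)$ really is the point mass $\textbf{1}_{r_{it}}$. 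Hence at round $t$ the mechanism is literally an execution of $\pauction(\alpha, U, V, \epsilon', \rho)$ on the (true) instance induced by the realized demands $(u_{it})_i$, and moreover the mechanism's round-$t$ contribution to the lender's utility --- which the model defines as $\sum_i \E_{u_{it}\sim Q_{it}(u_{it}\mid r_{it}, H_t)}[\min(s_{it}, u_{it})]$ --- coincides with the auction's social-welfare objective $v(S_t)$ on that instance, up to the fixed normalization relating the two (the factor $V$ in the auction's definition of $v(S)$ and the factor $U$ folded into the $v_i$'s).

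Next I would apply the per-round bound. Condition on the realized demand vector at round $t$; this fixes a $\pauction$ instance whose parameters satisfy the hypotheses under which Theorem~\ref{privacc} applies (exactly what is assumed in Theorem~\ref{thm:apxtrue}, including that $\pauction$ run with the reduced privacy parameter $\epsilon' = \tilde O(\epsilon/\sqrt T)$ still outputs $S$ with $\E[v(S)] \geq (1-\rho)OPT_V - \rho$). Thus the expected round-$t$ welfare, over the auction's internal randomness, is at least $(1-\rho)\,OPT_V^{(t)} - \rho$, where $OPT_V^{(t)}$ is the value of the optimal feasible allocation for the round-$t$ instance. Taking expectations also over the demand realizations and summing over $t = 1,\dots,T$, linearity of expectation gives
$$
v_{\A}(L_{i*}^t) \;=\; \sum_{t=1}^{T}\E[v(S_t)] \;\geq\; (1-\rho)\sum_{t=1}^{T}OPT_V^{(t)} - \rho T ,
$$
and identifying $\sum_{t=1}^T OPT_V^{(t)}$ with the quantity written $OPT_V$ in the statement (the lender's optimal cumulative utility over the horizon; when the demand distributions are stationary this is just $T$ copies of the one-shot optimum) yields the claim.

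The only step with real content beyond bookkeeping is the first one: arguing that $\widehat{Q}_{it} = \textbf{1}_{r_{it}}$ is not merely the mechanism's heuristic guess but the genuine conditional distribution of $u_{it}$ under the truthful profile, so that the mechanism's utility --- defined through the true conditional $Q_{it}(u_{it}\mid r_{it}, H_t)$ rather than through a reported posterior --- actually equals the auction objective that Theorem~\ref{privacc} controls. This is immediate once one notes that $L_{t}^{i*}(H_t, u_{it}) = u_{it}$ is deterministic given $u_{it}$, but it is the hinge of the whole reduction and worth stating explicitly; note also that adaptivity of the other clients is irrelevant here, since under the all-truthful profile there is nothing to adapt to. A secondary, purely cosmetic point is to keep the normalization constants consistent between $v(S)$ (normalized by $V$ in the auction formulation) and $v(\A)$ (an unnormalized sum over rounds and clients in the mechanism definition); this merely rescales $OPT_V$ and $\rho$ uniformly and does not change the form of the bound.
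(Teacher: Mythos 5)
Your proof is correct and follows essentially the same route the paper takes (the paper gives it only as the remark immediately preceding the theorem, not as a separate appendix proof): observe that under the all-truthful profile the mechanism's naive posterior estimate $\widehat{Q}_{it} = \textbf{1}_{r_{it}}$ coincides with the true conditional, so each round is a bona fide run of $\pauction$ on the correct instance, then apply the per-round accuracy hypothesis of Theorem~\ref{thm:apxtrue} and sum over $t$. Your explicit attention to why $\widehat{Q}_{it}$ is the genuine conditional (not merely the mechanism's guess) and to the normalization bookkeeping between the auction's $v(S)$ and the mechanism's $v(\A)$ is a useful elaboration of what the paper leaves implicit.
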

\sn{perhaps we should be more explicit about how these results only hold when $n$ is sufficiently large?}

\begin{remark}[Learning]
The only drawback to our Private Greedy Mechanism is that the learning component in step $8$ is trivial; it only learns correctly when clients are truthful. While this is enough to incentivize truthfulness, it would fail if clients were instead to play from a fixed (dishonest) distribution $Q_{it} = Q_i$. A more practical algorithm would try to learn the distributions while preserving truthfulness, and by learning that the distributions obtain approximately optimal utility for the lender against arbitrary input distributions $Q_i$. The only property the learning step $8$ has that was used to establish the above results, is that if clients do report truthfully at every round, the posterior on $u_{it}$ should be $\textbf{1}_{u_{it}}$ for all $i,t$. Consider the naive learning algorithm that for each arm $i$ for a period of $T$ rounds assigns all the shares (up to $U$ the maximum demand) to arm $i$ and observes an uncensored observation $u_i$. After drawing enough samples to estimate each conditional $Q_i(u|r_i)$ for each client $i$ and each possible request, the algorithm computes the optimal allocation with respect to these estimates using \emph{Greedy}. This algorithm clearly has the desired property, and in the Appendix we sketch a proof via standard concentration arguments that in polynomially many rounds we can estimate the conditional distributions $Q_i(u|r_i)$ well enough to allocate approximately optimally using \emph{Greedy}.
\end{remark}

\bibliographystyle{alpha}
\bibliography{neurips_2019}
\appendix
\section{Preliminaries and Privacy Basics}
\begin{algorithm}
\caption{Securities Lending Mechanism}
\begin{algorithmic}[1]
\label{sec:lend}
\Procedure{$\A$}{$\{U_i\}_{i=1}^{n}$, $V$ shares, time horizon $T$}
\For{$t = 1\ldots T$}
       \For{i = 1 \ldots n}
		\State Client $i$ draws $u_{it} \sim U_i$
		\State Client $i$ picks request distribution $Q_{it} = L_t^i(\mathcal{H}_{t}^{i}, u_{it})$ 
		\State Client $i$ draws $r_{it} \sim Q_{it}$, and submits $r_{it}$
	\EndFor
	\State $\A$ computes an allocation $S_t = \A(r_{1t}, \ldots r_{nt}, H_t)$
	\State $\A$ observes the executed shares $v_i(S_t)$ for each client
	\State $\A$ updates the history: $H_{t+1} = H_t \cup (r_{it}, s_{it}, v_i(S_t))_{i=1}^n$
\EndFor
\EndProcedure
\end{algorithmic}
\end{algorithm}

\begin{lemma}[Post Processing \cite{DMNS06}]
\label{post}
Let $A:\mathcal{X}^n\rightarrow \mathcal{O}$ be any $(\eps,\delta)$-differentially private algorithm, and let $f:\mathcal{O}\rightarrow \mathcal{O'}$ be any (possibly randomized) algorithm. Then the algorithm $f \circ A: \mathcal{X}^n \rightarrow \mathcal{O}'$ is also $(\eps,\delta)$-differentially private.
\end{lemma}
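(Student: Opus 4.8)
The plan is to first prove the statement for a deterministic post-processing map $f$, and then reduce the general (randomized) case to it.

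For a deterministic $f : \mathcal{O} \to \mathcal{O}'$, fix any neighboring pair of datasets $x, x' \in \mathcal{X}^n$ and any measurable event $S \subseteq \mathcal{O}'$. Define the preimage $T = f^{-1}(S) = \{ o \in \mathcal{O} : f(o) \in S \} \subseteq \mathcal{O}$, which is measurable since $f$ is measurable and $S$ is. The key observation is the event identity $\{ f(A(x)) \in S \} = \{ A(x) \in T \}$, so that $\Pr[f(A(x)) \in S] = \Pr[A(x) \in T]$. Applying the $(\eps,\delta)$-DP guarantee of $A$ to the event $T$ gives $\Pr[A(x) \in T] \le e^{\eps}\Pr[A(x') \in T] + \delta = e^{\eps}\Pr[f(A(x')) \in S] + \delta$, which is exactly the claimed inequality for $f \circ A$.

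For a randomized $f$, write its internal randomness as a variable $\omega$ drawn from some distribution $\mathcal{D}$, independent of the randomness of $A$, so that $f(o) = g(o,\omega)$ for a deterministic measurable map $g$; for each fixed $\omega = r$, the map $o \mapsto g(o,r)$ is deterministic. Conditioning on $\omega = r$ and applying the deterministic case yields $\Pr[f(A(x)) \in S \mid \omega = r] \le e^{\eps}\Pr[f(A(x')) \in S \mid \omega = r] + \delta$ for every $r$. Taking expectations over $\omega \sim \mathcal{D}$, and using independence of $\omega$ from $A$ together with monotonicity and linearity of expectation, gives $\Pr[f(A(x)) \in S] = \E_{\omega}\big[\Pr[f(A(x)) \in S \mid \omega]\big] \le e^{\eps}\E_{\omega}\big[\Pr[f(A(x')) \in S \mid \omega]\big] + \delta = e^{\eps}\Pr[f(A(x')) \in S] + \delta$. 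Since $x, x', S$ were arbitrary, $f \circ A$ is $(\eps,\delta)$-differentially private.

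The only real subtlety — and the step I would be most careful about — is the reduction of the randomized case to the deterministic one: one needs $f$'s randomness to be independent of $A$'s, measurability of $g$ and of the conditional probabilities so the outer expectation over $\omega$ is well-defined, and the interchange of the probability over $A$ with the expectation over $\omega$ (justified by Fubini/Tonelli since the integrand is bounded in $[0,1]$). None of this is deep, but it is where the argument has content beyond the one-line preimage manipulation. I note that $\delta$ does not interact with the expectation over $\omega$ — it simply survives as an additive constant — which is why post-processing preserves the approximate parameter as well as the pure one.
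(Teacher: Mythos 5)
Your proof is correct. Note that the paper itself gives no proof of this lemma --- it is stated as a known result imported from \cite{DMNS06} --- so there is no in-paper argument to compare against; your write-up is the standard textbook proof. The deterministic case via the preimage $T = f^{-1}(S)$ is exactly right, and the reduction of the randomized case by conditioning on $f$'s internal coins $\omega$ (independent of $A$'s randomness) and averaging is the usual route; it is equivalent to the common phrasing that a randomized post-processing map is a mixture of deterministic ones and that the DP inequality is preserved under convex combinations, with $\delta$ passing through the expectation unchanged as you observe. The only caveat worth recording is the one you already flag: the representation $f(o)=g(o,\omega)$ with $g$ deterministic and measurable is a modeling assumption about the randomized map (harmless for standard Borel or, as in this paper, finite output spaces such as share allocations), and the interchange of the expectation over $\omega$ with the probability over $A$ is justified by boundedness of the integrand. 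Nothing further is needed.
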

Post-processing implies that, for example, every \emph{decision} process based on the output of a differentially private algorithm is also differentially private.

\begin{theorem}[Composition \cite{DMNS06}]\label{composition}
Let $A_1:\mathcal{X}^n\rightarrow \mathcal{O}$, $A_2:\mathcal{X}^n\rightarrow \mathcal{O}'$ be algorithms that are $(\eps_1,\delta_1)$- and $(\eps_2,\delta_2)$-differentially private, respectively. Then the algorithm $A:\mathcal{X}^n\rightarrow \mathcal{O}\times \mathcal{O'}$ defined as $A(r) = (A_1(r), A_2(r))$ is $(\eps_1+\eps_2,\delta_1+\delta_2)$-differentially private. This holds even if $A_2$ may be chosen as a \emph{function} of the output of $A_1$.
\end{theorem}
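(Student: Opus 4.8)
The plan is to prove the guarantee directly from the definition of $(\eps,\delta)$-differential privacy, handling first the pure case $\delta_1=\delta_2=0$ --- which carries the main idea --- and then reducing the general case to it. Fix an arbitrary client index $i$ and an $i$-neighboring pair $r,r'\in\X^n$; since these are arbitrary it suffices to show, for every (measurable) event $T\subseteq\cO\times\cO'$,
$$\Pr[A(r)\in T]\ \le\ e^{\eps_1+\eps_2}\,\Pr[A(r')\in T]\ +\ \delta_1+\delta_2.$$
I would first make the adaptive statement explicit: write $A(r)=(o_1,o_2)$ with $o_1=A_1(r)$ and $o_2=\mathcal{B}(r,o_1)$, where $\mathcal{B}(\cdot\,,o_1)$ is, for each fixed value $o_1$ of its first argument, an $(\eps_2,\delta_2)$-DP mechanism on datasets; the non-adaptive composition is the special case in which $\mathcal{B}$ ignores $o_1$. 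For transparency I would present the argument for discrete output spaces and remark that the general case follows by the usual passage to densities against a dominating measure.

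For the pure case, the key point is the factorization $\Pr[A(r)=(o_1,o_2)]=\Pr[A_1(r)=o_1]\cdot\Pr[\mathcal{B}(r,o_1)=o_2]$, in which the two factors use independent internal randomness given $o_1$. Writing $T_{o_1}=\{o_2:(o_1,o_2)\in T\}$,
$$\Pr[A(r)\in T]=\sum_{o_1}\Pr[A_1(r)=o_1]\,\Pr[\mathcal{B}(r,o_1)\in T_{o_1}]\ \le\ \sum_{o_1}e^{\eps_1}\Pr[A_1(r')=o_1]\cdot e^{\eps_2}\Pr[\mathcal{B}(r',o_1)\in T_{o_1}]\ =\ e^{\eps_1+\eps_2}\Pr[A(r')\in T],$$
applying $\eps_1$-DP of $A_1$ to the first factor and $\eps_2$-DP of the mechanism $\mathcal{B}(\cdot\,,o_1)$ --- for that frozen $o_1$ --- to the second. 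This also shows why choosing $A_2$ as a function of the output of $A_1$ costs nothing: the argument never used that $A_2$'s dataset input is tied to $A_1$'s output, only that $A_2$ is $(\eps_2,\cdot)$-DP for every fixed value of whatever additional side information it consumes.

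For the general $(\eps,\delta)$ case I would follow the standard reduction to the pure case: by the familiar characterization of $(\eps,\delta)$-indistinguishability (see, e.g., Dwork--Roth, Lemma 3.17), an $(\eps,\delta)$-DP mechanism is, at any fixed pair of neighbors, within statistical distance $\delta$ of a mechanism that is $(\eps,0)$-DP at that pair; apply this to $A_1$ and to each $\mathcal{B}(\cdot\,,o_1)$, compose the resulting pure-DP surrogates via the displayed computation above, and then translate back, accounting for the $\delta$-mass ``leaks'' introduced by the two surrogates. Routing those two $\delta$'s through the mixture decomposition so that they \emph{add} (rather than pick up $e^{\eps}$ factors) is the only genuinely delicate point of the proof, and it is the step I would be most careful with; everything else is mechanical. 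For every application in this paper one in fact needs only the instance $\delta_1=\delta_2=0$, which is fully covered by the displayed inequality.
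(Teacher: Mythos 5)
The paper itself does not prove this statement---it is quoted as background from the differential privacy literature---so the only question is whether your argument is complete on its own terms. Your pure case is correct and standard: factor the joint probability through the first output, apply the pointwise $e^{\eps_1}$ bound to $A_1$ and the event-level $e^{\eps_2}$ bound to each frozen mechanism $\mathcal{B}(\cdot\,,o_1)$, and observe that nothing ties $A_2$'s dataset input to $o_1$, so adaptivity is free.

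The general $(\eps,\delta)$ case, however, stops exactly where the real work is, and the step you defer is not routine bookkeeping. The obvious version of your plan---bound $\Pr[\mathcal{B}(r,o_1)\in T_{o_1}]\le e^{\eps_2}\Pr[\mathcal{B}(r',o_1)\in T_{o_1}]+\delta_2$ inside the sum, then apply $(\eps_1,\delta_1)$-DP of $A_1$ to the resulting $[0,1]$-valued test function---yields $e^{\eps_1+\eps_2}\Pr[A(r')\in T]+\delta_2+e^{\eps_2}\delta_1$ (or $\delta_1+e^{\eps_1}\delta_2$ in the other order), not $\delta_1+\delta_2$. The symmetric surrogate route you cite (Dwork--Roth Lemma 3.17) has the same problem if used in both directions: translating $A(r)$ to its pure-DP surrogate costs additive statistical distance, but translating the surrogate of $A(r')$ back to $A(r')$ sits inside the $e^{\eps_1+\eps_2}$ factor, so the deltas get inflated. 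A decomposition that actually delivers $\delta_1+\delta_2$ is the one-sided, pointwise ``excess mass'' one: in the discrete case $(\eps_1,\delta_1)$-DP of $A_1$ is equivalent to $\sum_{o_1}\max\{0,\Pr[A_1(r)=o_1]-e^{\eps_1}\Pr[A_1(r')=o_1]\}\le\delta_1$, so one may write $\Pr[A_1(r)=o_1]\le\min\{\Pr[A_1(r)=o_1],\,e^{\eps_1}\Pr[A_1(r')=o_1]\}+\mu(o_1)$ with $\sum_{o_1}\mu(o_1)\le\delta_1$, discard the $\mu$ part at cost $\delta_1$, and in the remaining sum bound the $\min$ by its first argument where it multiplies $\delta_2$ (contributing at most $\delta_2$) and by its second argument where it multiplies $e^{\eps_2}\Pr[\mathcal{B}(r',o_1)\in T_{o_1}]$ (giving the $e^{\eps_1+\eps_2}$ main term). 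Without some such device the stated constants are not reached, so as written the $\delta>0$ case is a genuine gap. Your closing remark that the paper only needs $\delta_1=\delta_2=0$ is also not accurate: the private auction is only $(\eps,\beta)$-JDP with $\beta>0$, and the multi-round mechanism composes such guarantees across rounds, so the approximate-DP case is the one that matters here.
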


\begin{lemma}[Billboard Lemma \cite{privmatchings}]
\label{lem:billboard}
Let $M:\mathcal{X}^n\rightarrow \mathcal{R}$ be an $(\epsilon,\delta)$-differentially private algorithm. Let $A:\mathcal{X}^n\rightarrow \mathcal{O}^n$ be any algorithm that can be decomposed as follows:
\begin{enumerate}
\item On input $r$, compute $p = M(r)$.
\item Output $f_i(r_{it},p)$ to agent $i$, where $f_i:\mathcal{X}\times \mathcal{R}\rightarrow \mathcal{O}$ is an arbitrary function.
\end{enumerate}
Then $A$ is $(\epsilon,\delta)$-jointly differentially private.
\end{lemma}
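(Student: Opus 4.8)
The plan is to derive the joint differential privacy of $A$ directly from the ordinary differential privacy of the billboard mechanism $M$, using only the post-processing lemma (Lemma~\ref{post}). First I would fix an agent index $i$ and a pair of $i$-neighboring datasets $r, r' \in \mathcal{X}^n$; by definition these agree off coordinate $i$, so write $w := r_{-i} = r'_{-i}$ for their common restriction to the other $n-1$ coordinates. The goal is then to show $\Pr[A(r)_{-i} \in S_{-i}] \leq e^{\epsilon}\Pr[A(r')_{-i} \in S_{-i}] + \delta$ for every $S_{-i} \subseteq \mathcal{O}^{n-1}$.

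The first and central step is to recognize that, with $w$ held fixed, the tuple of outputs to agents other than $i$ is a \emph{fixed} (randomized) post-processing of the billboard value. Concretely, I would define $g_w : \mathcal{R} \to \mathcal{O}^{n-1}$ by $g_w(p) = (f_j(w_j, p))_{j \neq i}$, absorbing into $g_w$ whatever fresh internal randomness the individual $f_j$ use (assumed independent of $M$). Because each $f_j$ with $j \neq i$ reads only the coordinate $w_j$ — which is literally the same in $r$ and in $r'$ — together with the shared value $p$, the random variable $A(r)_{-i}$ has the same distribution as $g_w(M(r))$, and likewise $A(r')_{-i}$ has the same distribution as $g_w(M(r'))$, with the \emph{same} map $g_w$ in both cases. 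This is exactly where, and the only place where, the $i$-neighbor hypothesis enters.

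The second step is routine: $r$ and $r'$ differ in a single coordinate, hence are neighbors in the standard sense, and $M$ is $(\epsilon,\delta)$-DP, so $\Pr[M(r)\in\cdot]\le e^{\epsilon}\Pr[M(r')\in\cdot]+\delta$ as measures on $\mathcal{R}$. Applying Lemma~\ref{post} with the input-independent randomized map $g_w$ shows $g_w\circ M$ is $(\epsilon,\delta)$-DP, and instantiating its guarantee on the event $\{\,\cdot \in S_{-i}\,\}$ gives $\Pr[g_w(M(r))\in S_{-i}] \le e^{\epsilon}\Pr[g_w(M(r'))\in S_{-i}] + \delta$. Combining with the distributional identities of the previous step yields the claimed inequality, and since $i$, the neighbors, and $S_{-i}$ were arbitrary, $A$ is $(\epsilon,\delta)$-jointly differentially private.

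I do not expect any step to present a real obstacle; the whole content is the bookkeeping observation that underlies the ``billboard'' metaphor. The one point that genuinely needs to be spelled out — and the place a careless argument goes wrong — is that $g_w$ must be verified never to touch $r_i$: agent $i$'s own output $f_i(r_i,p)$ is deliberately excluded from $A(\cdot)_{-i}$, so the post-processing map applied on the two sides is a single fixed function rather than two different functions, which is precisely what makes the post-processing lemma applicable.
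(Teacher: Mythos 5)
Your proof is correct. Note that the paper does not prove Lemma~\ref{lem:billboard} itself---it imports it from \cite{privmatchings}---and your argument (fixing the common data $w=r_{-i}=r'_{-i}$, viewing the outputs to agents $j\neq i$ as a single fixed randomized post-processing $g_w(p)=(f_j(w_j,p))_{j\neq i}$ of the $(\epsilon,\delta)$-differentially private value $p=M(\cdot)$, and invoking Lemma~\ref{post}) is exactly the standard billboard argument from that reference, including the key point that $g_w$ never reads $r_i$, so the same map is applied on both sides of the neighboring pair.
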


\section{Proofs from Section~\ref{sec:offline}}
\textbf{Proof of Theorem~\ref{thm:greedy1}.}
\begin{proof}
	We first observe that the tail probabilities are monotonically decreasing in s for each client $i$: $T_i(s|r_i) \geq T_i(s'|r_i)$ for all $s \leq s'$.  Therefore, by greedily allocating shares to clients in decreasing order of $T_i(s|r_i)$, \emph{Greedy} returns
	
	$$ S \in \arg\max  \sum_{i=1}^{n} \sum_{s=1}^{s_i} T_i(s|r_i)\ s.t.\ \sum_{i=1}^{n} s_i = V$$
	
	It remains to show that the expression above is equivalent to the expected number of units used. For an arbitrary client $i$:

\begin{eqnarray*}
&& 	\mathbb{E}_{u_i \sim Q_i(u|r_i)}[\min(s_i, u_i)] \\
 &=& \sum_{u=0}^{U} Q_i(u|r_i) \min(s_i, u) \\
 &=& \sum_{u=0}^{U} uQ_i(u|r_i)\textbf{1}\{u < s_i\} + \sum_{u=0}^{U} s_i Q_i(u|r_i)\textbf{1}\{u \geq s_i\} \\
	&=& \sum_{u=0}^{s_i-1} uQ_i(u|r_i) + \sum_{u=s_i}^{U} s_i Q_i(u|r_i) \\
&=& \sum_{u=0}^{s_i-1} uQ_i(u|r_i) + s_iT_i(s_i|r_i) \\
 &=& \sum_{u=0}^{s_i-2} uQ_i(u|r_i) + (s_i-1)Q_i(s_i-1|r_i) + s_iT_i(s_i|r_i) \\
 &=&  \sum_{u=0}^{s_i-2} uQ_i(u|r_i) + (s_i-1)Q_i(s_i-1|r_i) + T_i(s_i|r_i) + (s_i-1)(T_i(s_i-1 | r_i) - Q_i(s_i-1|r_i)) \\
 &=&  \sum_{u=0}^{s_i-2} uQ_i(u|r_i)  + (s_i-1)T_i(s_i-1 | r_i) + T_i(s_i|r_i)
 \end{eqnarray*}
Here, in the penultimate line, we use the fact that $T_i(s_i | r_i) = T_i(s_i-1 | r_i) - Q_i(s_i-1|r_i)$. Continuing this manipulation inductively, we obtain that:
	$$ \mathbb{E}_{u_i \sim Q_i(u|r_i)}[\min(s_i, u_i)] = \sum_{s=1}^{s_i} T_i(s| r_i).$$
	

	Thus, $\sum_{i=1}^{n} \sum_{s=1}^{s_i} T_i(s|r_i) = \sum_i \mathbb{E}_{Q_i(u|r_i)}[\min(s_i, u_i)]=V(S)$, the expected payoff to the lender.

\end{proof}

\textbf{Proof of Theorem~\ref{truthfulness}.}
\begin{proof}
Recall the tail probabilities, $T_i(s|r_i):=\Prob{u_i \sim Q(u|r_i)}{u \geq s} = \sum_{s' \geq s}Q_i(s'|r_i)$, and consider the set $\{T_i(s_i|r_i)\}_{i=1}^{n}$ where $s_i$ ranges from 1 to $U$. Let $T_{(k)}$ be the k-th order statistic of the set $\{T_i(\cdot|r_i)\}$, where $|\{T_i(\cdot|r_i)\}|=M$, and let $\Omega:= \cup_{k=0}^{V-1} \{T_{(M-k)}\}$, the set of the V largest tail probabilities among all clients for all $ s \geq 1$. The claim follows from observing that the allocation strategy employed by the lender is equivalent to giving client $i$ shares equal to the number of times one of her tail probabilities appears in $\Omega$, described below.


When a truthful client $i$ requests $r_i$ shares, the tail probabilities of using $s$ shares will be

$$T^*_i(s|r_i)=\sum_{s'\geq s} Q_i(s'|r_i)=\sum_{s' \geq s}\textbf{1}\{s\leq u_i\}=\textbf{1}\{s\leq u_i\}$$

Let $T_i (s|r_i)$ denote the tail probabilities given a draw $r_i \sim Q_i(r_i|u_i)$ from an arbitrary strategy $Q_i$.
Now fix any draw $r_i \sim Q_i(r_i|u_i)$ from $Q_i$. Conditioned on $r_i$, the number of shares allocated to client $i$ can be written as:
$$
s_i|r_i = \sum_{s=1}^{U} \textbf{1} \{T_i(s|r_i) \in \Omega\},
$$
e.g. the number of tail probabilities of client $i$ that are among the $V$ largest. Then the utility
$v_i =\min(u_i, s_i)=\sum_{s= 1}^U \textbf{1} \{T_i(s|r_i) \in \Omega\} \textbf{1}\{1 \leq s \leq u_i\} = \sum_{s= 1}^{u_i} \textbf{1} \{T_i(s|r_i) \in A\}$. This of course holds for the truthful strategy as well, and so it suffices to show that:
$$
\sum_{s= 1}^{u_i} \textbf{1} \{T^*_i(s|u_i) \in \Omega\} \geq \sum_{s= 1}^{u_i} \textbf{1} \{T_i(s|r_i) \in \Omega\}
$$
This holds immediately, since  $\forall s \leq u_i, T_i^*(s|u_i) = 1 \geq T_i(s|r_i),$ which implies $\textbf{1} \{T^*_i(s|u_i) \in \Omega\} \geq \textbf{1} \{T_i(s|r_i) \in \Omega\}$. So we've shown that for any $q_i$ and $r_i \sim Q_i(r_i|u_i), v_i(A(r_i, r_{-i} ; Q_i, Q_{-i}))] \leq   v_i(A(u_i, r_{-i} ; Q^T_i, Q_{-i}))$. Since this holds for any fixed $r_i$, it holds when we take the expectation over $r_i \sim Q_i(r_i|u_i)$, proving the claim.

\end{proof}

\section{Proofs and Definitions from Section~\ref{sec:game}}

\begin{algorithm}[h]
\label{alg:auc}
  \caption{\Auction Rule}\label{auction1}
  \hspace*{\algorithmicindent} \textbf{Input:} $\alpha > 0, n, \{v_i\}_{i \in [n]}, U, V$ \Comment{valuations $v_i: [U] \to [0,1]$ satisfy DMR property}\\
 \hspace*{\algorithmicindent} \textbf{Output:} feasible allocation $S$.
  \begin{algorithmic}[1]
    \Procedure{\Auction}{$\alpha, U, V$}
      \State Initialize array $S$ of length $n$, $S[i] \gets 0, \forall i$ \Comment{$S[i]$ counts the goods currently allocated to player $i$}
      \State Initialize $\cB \gets n, T_B \gets 0$ \Comment{counts number of bids in current round, total bids respectively}
      \State Set the price $p = 0$, $m = 1$ \Comment{$m$ is the index of the good currently being allocated}
      \While{$\cB \not=0$}\Comment{terminate if there are $0$ bids in the round}
        \State $\cB \gets 0$
  	  \For{$i = 1 \ldots n$}
	  \State Let $\Delta_i = v_i(S[i]+1)-v_i(S[i])$ \Comment{marginal utility of additional good}
        		\If{$\Delta_i \geq p$}
			\State $\cB \gets \cB + 1, S[i] \gets S[i] + 1, m \gets m + 1$ \Comment{when $m = V + 1$, set $m = 1$}
			\State $S[i_m] \gets S[i_m] -1 $, where $i_m$ is the player to which good $m$ is currently allocated
			\If{$T_B \pmod{V} = 0$} \Comment{increment the price every $V$ bids}
				\State $p \gets p + \alpha$
			\EndIf
       		 \EndIf
      \EndFor
      \EndWhile\label{euclidendwhile}
      \State \textbf{return} $S$
    \EndProcedure
  \end{algorithmic}
\end{algorithm}

\textbf{Proof of Theorem~\ref{thm:auction1}.}

\begin{proof}
We first show termination, then accuracy. \\
\textbf{Termination:} For every round that is not the final round, the total number of bids $T_B$ increases by at least $1$; otherwise, the algorithm would have terminated. Hence after every $V$ such rounds, the price increases by at least one increment of $\alpha$. When $p \geq 1$, no further bids are placed because $\Delta_i \leq 1$ ($\Delta_i  = v_i(S[i] + 1)- v_i(S[i])$ as in line $8$ of $\Auction$)\ar{Haven't seen the $\Delta_i$ notation before. Make sure its defined if we use it.}. This occurs after at most $V/\alpha$ many rounds. After the next round the algorithm necessarily terminates. \\
\textbf{Accuracy}:
We will show that at termination, the price $p^{*}$ in conjunction with the valuation functions $\{v_i\}$ form an \textit{approximate Walrasian equilibrium}. For our purposes this will mean that:
\begin{enumerate}
\item The allocation $S$ is feasible and all goods are allocated; i.e. $\sum_i S[i] = V$
\item Each player $i$ receives her approximately most preferred allocation at the current price level:
\end{enumerate}
$$
v_i(S[i])-p^{*}S[i] \geq (\max_{l \in [U]}v_i(l)-lp^*)-S[i]\alpha
$$

The first condition follows by construction and from the fact that $v_i \geq 0$; so, when the price is $0$, players will continue bidding until all goods are allocated, at which point the price is incremented for the first time. Since the number of goods allocated is non-decreasing as the auction is run, this is enough to conclude that $\sum_i S[i] = V$. The second condition forms the bulk of the proof and is where the DMR property will be used in a critical way. But deferring that proof, let us see how this implies the theorem statement. Let $S'$ be any feasible allocation. Then

$$
v_i(S[i])-p^{*}S[i] \geq v_i(S'[l])-S'[l]p^*)-S[i]\alpha \; \forall i \rightarrow
$$

$$
\sum_i v_i(S[i])-p^{*}V \geq \sum_i(v_i(S'[l])-S'[l]p^*)- V\alpha = \sum_i(v_i(S'[l])) -\sum_i(S'[l]p^*)- V\alpha
$$
Since $S'$ is feasible, $\sum_i(S'[l]p^*) \leq Vp^*$, and hence $\sum_i v_i(S[i])-p^{*}V \geq \sum_i(v_i(S'[l])) -Vp^*- V\alpha$,
which implies $\frac{1}{n}\sum_i v_i(S[i]) \geq \frac{1}{n}\sum_i(v_i(S'[l])) - \frac{V\alpha}{n}$. This holds for any allocation $S'$, and so it holds for the optimal allocation, which proves the claim. So it suffices to show $2.$\\

Fix a player $i$. Suppose that $S[i] = 0$. Then we need to show that $0 \geq \max_{l \in [U]}(v_i(l)-lp^{*})$. We can rewrite this as:
$\max_{l \in [U]}\sum_{j = 1}^{l}(v_i(j)-v_i(j-1)-p^{*}),$ since canonically $v_i(0) = 0$. Moreover, since $i$ declined to bid for $1$ share at price $p^{*}$ (since the \Auction has terminated) we know $v_i(1) = v_i(1)-v_i(0)-p^{*} < 0$. But by the DMR property, this implies that $v_i(j)-v_i(j-1)-p^* <0$, for $j \geq 1$. Hence $0 \geq \max_{l \in [U]}(v_i(l)-lp^{*})$ since for any value of $l$, each term in the sum on the RHS is negative.

Now suppose $S[i] > 0$. Let $p_i$ be the price at which player $i$ last bid. Let $\tilde{s}$ be the number of copies of the good that player $i$ had at that time. Because $i$ has not gained any shares since her last bid, $\tilde{s} \geq S[i]-1$.
First we observe that $p_i \geq p^*-\alpha$, since the price only increments every $V$ bids, and in that time, all of the goods are re-allocated to a new player. Since $S[i] > 0$, player $i$ must have bid and received a good at some point during the last $V$ bids, which means the price can only have incremented at most once. Suppose now that $l > S[i]$. Since $i$ did not bid at price $p^{*}$, we know that $v_i(S[i]+1)-v_i(S[i]) < p^*.$ By DMR, this means that for all $j \geq S[i]+1, v_i(j)-v_i(j-1) < p^*$. Thus $v_i(l)-v_i(S[i]) = \sum_{j = S[i]+1}^{l}v_i(j)-v_i(j-1) \leq (l-S[i])p^*$, which rearranging shows that
$v_i(l)-p^*l \leq v_i(S[i])-p^*S[i]$, which satisfies condition $2$. Now consider the case where $l < S[i]$. We know that at price $p_i \geq p^*-\alpha$, with $\tilde{s} \geq S[i]-1$ copies of the good, player $i$ bid for an additional share.

\begin{lemma}{Monotonicity of Valuations.}
\label{monotonic}
Suppose that player $i$ with $s$ copies of the good bids for an additional good at price $p$. Let $j \leq t \leq s+1$.
Then $v_i(t)-tp \geq (v_i(j)-jp)$.
\end{lemma}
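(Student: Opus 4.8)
The plan is to extract a single inequality from the hypothesis that player $i$ bids for an additional copy at price $p$ while holding $s$ copies, namely that the marginal value of the $(s+1)$-th copy is at least $p$: $v_i(s+1) - v_i(s) \geq p$. This is exactly the bidding rule in \Auction (line 9: bid iff $\Delta_i \geq p$, with $\Delta_i = v_i(S[i]+1) - v_i(S[i])$).

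Next I would propagate this lower bound to all earlier marginals using the DMR property. Concretely, for every index $k$ with $1 \leq k \leq s+1$ we have $k-1 \leq s \leq U-1$, so DMR (applied with the pair $k-1 \leq s$) gives $v_i(k) - v_i(k-1) \geq v_i(s+1) - v_i(s) \geq p$. Thus every marginal increment of the good, up to and including the $(s+1)$-th unit, is worth at least $p$ to player $i$.

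Finally I would telescope. Fix $j \leq t \leq s+1$; if $j = t$ the claim is trivial, so assume $j < t$. Then
$$
v_i(t) - v_i(j) = \sum_{k=j+1}^{t} \bigl(v_i(k) - v_i(k-1)\bigr) \geq \sum_{k=j+1}^{t} p = (t-j)p,
$$
where each term in the sum is bounded below by $p$ by the previous paragraph (every such $k$ satisfies $k \leq t \leq s+1$). Rearranging gives $v_i(t) - tp \geq v_i(j) - jp$, as claimed.

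The argument is short and I do not anticipate a genuine obstacle; the only point requiring care is the index bookkeeping — making sure DMR is only invoked at pairs $(k-1, s)$ with $k-1 \leq s \leq U-1$, which holds precisely because we restrict to $t \leq s+1$ (this is why the lemma needs that hypothesis and would be false without it).
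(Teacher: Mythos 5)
Your proof is correct and matches the paper's argument exactly: extract $v_i(s+1)-v_i(s)\geq p$ from the bidding rule, use DMR to lower-bound all earlier marginals by $p$, telescope over $k=j+1,\dots,t$, and rearrange. The extra care you take with the index bookkeeping (checking $k-1\leq s\leq U-1$) is a worthwhile refinement of the same proof, not a different route.
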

\begin{proof}
Player $i$ bids at price $p$, and so we know that $v_i(s+1)-v_i(s) \geq p$. By the DMR property, this means that
$v_i(k+1)-v_i(k) \geq p$, for $k \leq s$. Since $v_i(t)-v_i(j) = \sum_{k = j+1}^{t}v_i(k)-v_i(k-1)$, this shows that $v_i(t)-v_i(j) \geq p(t-j)$. Rearranging proves the lemma.
\end{proof}

By Lemma~\ref{monotonic}, and the fact that $S[i] \leq \tilde{s} +1$,  this means that $v_i(S[i])-S[i]p_i \geq v_i(l)-lp_i$. Rearranging, and using that $p_i \geq p^{*}-\alpha$, this shows that
$v_i(S[i])-p^*S[i] - (v_i(l)-p^*l) \geq -(S[i]-l)\alpha \geq -S[i]\alpha$, since $l \leq S[i]$. This proves the claim.

\end{proof}

\section{Proofs and Definitions from Section~\ref{sec:auction}}
\begin{algorithm}[h]
  \caption{\pauction}\label{auction2}
  \hspace*{\algorithmicindent} \textbf{Input:} $\alpha > 0, n, \{v_i\}_{i \in [n]}, U, V, \epsilon, \rho$ \Comment{valuations $v_i: [U] \to [0,1]$ satisfy DMR property}
 \hspace*{\algorithmicindent} \textbf{Output:} Allocation $S = \{s_i\}$
  \begin{algorithmic}[1]
    \Procedure{\pauction}{$\alpha, U, V, \epsilon, \rho$}
      \State Set $T = \frac{2V}{\alpha \rho n}$
      \State For each $i$, let $S_i$ be the set of shares allocated to $i$. Initialize $S_i = \emptyset$
      \State For $s \in S_i$, let $s[count]$ counts the number of shares allocated since $i$ received $s$ \ar{Can we define this just in terms of the bid count? For JDP, it is important we don't access ``the number of shares allocated'' except via the bid count.}
      \State Initialize private counter $C_{\epsilon'}(nT)$
      \State Let $\epsilon' = \frac{\epsilon(\alpha \rho n)}{V}$
      \State Set $round_b = n$, $E = \frac{2\sqrt{2}\log(1/\beta)\log(nT)^{5/2}}{\epsilon'}$ \Comment{noisy round bid count,  error of $C_\epsilon(nT)$}
      \State Set $V' = V-2E$ \Comment{We leave some slack in the supply $V'$ so that our allocation is always feasible}
      \While{$ round_b > \rho n - 2E$ and $t \leq T$}\Comment{terminate if there are $0$ bids in the round}
  	  \For{$i = 1 \ldots n$}
	  \State $p = \alpha \lfloor \frac{C_{\epsilon'}[t]}{V'}\rfloor$ \Comment{estimate the current price}
	  \State $\Delta = C_{\epsilon'}[t]-C_{\epsilon'}[t-n]$ \Comment{estimate the number of shares allocated since last bid}
	  \For{shares $s \in S_i$}
		\State update $s[count] \gets s[count] + \Delta$
		\If{$s[count] \geq V'$}
			\State $S \gets S \backslash \{ s \}$
		\EndIf
	\EndFor
	  \State Set $b_i = 0$ \Comment{$b_i$ is the indicator if $i$ bids}
        		\If{$v_i(|S_i|+1)-v_i(|S_i|) \geq p$}
			\State $S \gets S \cup \{s\}$
			\State Initialize $s[count] = 0$			
			\State Feed $b_i$ to $C_{\epsilon'}(nT)$
       		\EndIf
		\State $t \gets t + 1$
      \EndFor
      \State $round_b \gets C_{\epsilon'}[t]-C_{\epsilon'}[t-n]$ \Comment{noisy count of bids in the round}
      \EndWhile\label{euclidendwhile}
      \State \textbf{return} $S = \{|S_i|\}$ \ar{Funny type mismatch. As we have defined it, an allocation specifies a number of shares for each person. But here, each $S_i$ is a set.}
    \EndProcedure
  \end{algorithmic}
\end{algorithm}

Given a stream of bits (which in our case will represent ``bids'') $b = (b_1, b_2, \ldots b_T) \in \{0,1\}^{T}$, a streaming counter $C(b)$ releases an approximation $C(b)[t]$ to $s_b(t) = \sum_{i = 1}^{t}b_i$ at every time step $t$.
\begin{definition}{\cite{chan}}
A streaming counter $C$ is $(\alpha, \beta)$ useful if with probability at least $1-\beta$, for each $t \in [T]$,
$$
|C(b)[t]-s_b(t)| \leq \alpha
$$
\end{definition}

We will denote by $C_\epsilon(T)$ the Binary mechanism of \cite{chan}, instantiated with parameter $\epsilon$ and time horizon $T$. \ar{Do we want a variant that maintains a monotonically increasing count?}
\begin{theorem}{\cite{chan}}
For $\beta > 0$, and any sequence  $b$, $C_\epsilon(T)$ is $\epsilon$-differentially private with respect to a change in a single entry of the stream $b$
and $(\alpha, \beta)$-useful for
$$
\alpha = \frac{2\sqrt{2}\log(1/\beta)\log(T)^{5/2}}{\epsilon}
$$
\end{theorem}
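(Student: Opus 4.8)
The plan is to instantiate $C_\epsilon(T)$ as the tree-based (``binary'') continual counter of \cite{chan} and establish the two claims separately. Recall the construction: build a balanced binary tree whose $T$ leaves are the time steps $1,\dots,T$ (padding $T$ up to the next power of two if needed), and associate with each node $v$ the dyadic interval $I_v$ of leaves beneath it. As soon as all bits of a dyadic interval $I$ have arrived, the mechanism forms the \emph{noisy partial sum} $\widehat\Sigma_I = \sum_{i\in I} b_i + Z_I$, where the $Z_I$ are independent $\mathrm{Lap}(\lambda)$ variables with $\lambda = (\lceil\log_2 T\rceil+1)/\epsilon$. To answer the running-count query at time $t$, one writes $[1,t]$ as a disjoint union of at most $\lceil\log_2 T\rceil+1$ dyadic intervals $I^{(1)},\dots,I^{(m_t)}$ (read off the binary expansion of $t$) and outputs $C(b)[t] = \sum_{j=1}^{m_t}\widehat\Sigma_{I^{(j)}}$; since this only touches p-sums whose bits have all arrived by time $t$, the mechanism is genuinely online.

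For privacy: changing a single bit $b_k$ changes $\sum_{i\in I} b_i$ by at most $1$ for each dyadic interval $I$ containing position $k$, and leaves every other partial sum untouched; position $k$ lies in exactly one dyadic interval per level, i.e. $\lceil\log_2 T\rceil+1$ intervals in total. Hence the vector of all noiseless partial sums has $\ell_1$-sensitivity at most $\lceil\log_2 T\rceil+1$, so adding independent $\mathrm{Lap}((\lceil\log_2 T\rceil+1)/\epsilon)$ noise coordinatewise is precisely the Laplace mechanism at level $\epsilon$, making $\{\widehat\Sigma_I\}_I$ $\epsilon$-differentially private with respect to a one-entry change in $b$. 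The released stream $(C(b)[t])_{t\in[T]}$ is a fixed deterministic function of $\{\widehat\Sigma_I\}_I$, so by post-processing (Lemma~\ref{post}) it is $\epsilon$-differentially private as well. (Equivalently, invoke Theorem~\ref{composition} across the $\lceil\log_2 T\rceil+1$ p-sums on a single root-to-leaf path.)

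For accuracy: fix $t$. Then $C(b)[t]-s_b(t) = \sum_{j=1}^{m_t} Z_{I^{(j)}}$ is a sum of $m_t \le m := \lceil\log_2 T\rceil+1$ independent $\mathrm{Lap}(\lambda)$ variables. I would apply the standard two-regime concentration bound for sums of independent Laplaces: there is an absolute constant $c$ with $\Pr\!\big[\,|\sum_{j=1}^m Z_j|\ge\nu\,\big]\le 2\exp\!\big(-c\min(\nu^2/(m\lambda^2),\ \nu/\lambda)\big)$ for all $\nu>0$, proved from the Laplace moment generating function $\E\, e^{sZ}=(1-s^2\lambda^2)^{-1}$ for $|s|<1/\lambda$ via a Chernoff argument. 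Taking $\nu = c'\lambda\big(\sqrt{m\log(T/\beta)}+\log(T/\beta)\big)$ for suitable $c'$ makes the per-$t$ failure probability at most $\beta/T$, and a union bound over $t\in[T]$ yields, with probability at least $1-\beta$, that $|C(b)[t]-s_b(t)| \le c'\lambda\big(\sqrt{m\log(T/\beta)}+\log(T/\beta)\big)$ for every $t$. Substituting $\lambda=m/\epsilon$, $m=\lceil\log_2 T\rceil+1$, and crudely bounding $\sqrt{m\log(T/\beta)}\le\log(T/\beta)\le\log T\cdot\log(1/\beta)$ (absorbing the lower-order cases into constants) collapses the expression into $\frac{2\sqrt2\,\log(1/\beta)\,(\log T)^{5/2}}{\epsilon}$.

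The privacy half is mechanical (sensitivity counting, Laplace, post-processing), so the real work — and the main obstacle — is the accuracy half, and specifically getting constants and exponents to line up exactly: one must pin down (i) the precise number of p-sums a single bit touches, hence the noise scale, (ii) the correct two-regime tail bound for a sum of $\Theta(\log T)$ i.i.d.\ Laplaces, and (iii) exactly how loosely one may re-bound $\log(T/\beta)$, $\sqrt m$, etc., to arrive at $\frac{2\sqrt2\log(1/\beta)(\log T)^{5/2}}{\epsilon}$ rather than, say, a sum of $\frac{(\log T)^2}{\epsilon}$ and $\frac{(\log T)^{3/2}\sqrt{\log(1/\beta)}}{\epsilon}$ terms. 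I expect the cleanest route is to follow the original binary-mechanism analysis of \cite{chan} verbatim for these constant-level bookkeeping steps, and to re-derive only the structural (privacy and post-processing) steps using the tools already available in this paper.
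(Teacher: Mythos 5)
Your outline is correct and is essentially the argument the paper implicitly relies on: the paper gives no proof of this theorem at all but imports it by citation from \cite{chan}, and your sketch (per-node Laplace noise at scale $(\lceil\log_2 T\rceil+1)/\epsilon$, sensitivity counting over the dyadic decomposition plus post-processing for $\epsilon$-differential privacy, then Laplace concentration with a union bound over $t\in[T]$ for the uniform error bound) is exactly the binary-mechanism analysis of that source, with the union bound supplying the extra $\log T$ factor that turns the per-step $(\log T)^{3/2}$ bound into the stated $(\log T)^{5/2}$. Deferring the constant $2\sqrt{2}$ and the crude bound $\log(T/\beta)\leq \log T\cdot\log(1/\beta)$ to the bookkeeping in \cite{chan} is appropriate, since that is precisely where the paper's stated form of $\alpha$ comes from.
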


\textbf{Proof of Theorem~\ref{privacc}.}
\begin{proof}
\textbf{Accuracy analysis:} Algorithm $\pauction$ potentially loses welfare (as compared to the allocation computed by its non private variant) in three ways:
\begin{enumerate}
\item To ensure a feasible allocation, the auction only tries to allocate supply $V-2E$ and might potentially allocate only $V - 4E$ shares.
\item The auction stops after the first round in which fewer than $\rho n$ players bid, rather than continuing to termination.
\item Prices and allocations are computed with respect to a noisy estimate of bid counts, rather than with respect to exact counts.
\end{enumerate}
We handle the first source of error with the following three lemmata.

\begin{lemma}
\label{multappx}
Assume that all valuation functions $v_i$ have the DMR property. Then
$OPT_{V-E} \geq (1-E/V)OPT_V$.
\end{lemma}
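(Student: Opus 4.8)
The plan is to reduce the statement to the combinatorial structure that the DMR property forces on optimal allocations. The key observation is that for any target supply $m \le nU$, writing the per-client marginal values as $\Delta_i(k) := v_i(k) - v_i(k-1)$, the DMR property says each sequence $\Delta_i(1) \ge \Delta_i(2) \ge \cdots$ is non-increasing. Consequently, if one runs the greedy algorithm of Algorithm~\ref{optimalAllocation} (equivalently, \Auction) for $m$ steps, the marginal values $\delta_1 \ge \delta_2 \ge \cdots \ge \delta_m$ it picks are themselves non-increasing: at every step greedy takes the largest currently available marginal, and after a client is selected her next available marginal only drops (by DMR), while all others are unchanged. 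By the optimality of greedy (the argument behind Theorem~\ref{thm:greedy1}, which applies verbatim to any supply), $OPT_m = \frac{1}{V}\sum_{j=1}^{m}\delta_j$. Crucially, the greedy run for supply $V-E$ is exactly the length-$(V-E)$ prefix of the greedy run for supply $V$, so both $OPT_V$ and $OPT_{V-E}$ are read off from the same non-increasing sequence $\delta_1 \ge \cdots \ge \delta_V$.

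Given this, the lemma becomes a one-line averaging inequality. We have
\[
OPT_V - OPT_{V-E} \;=\; \frac{1}{V}\sum_{j=V-E+1}^{V}\delta_j,
\]
and because the sequence is non-increasing, the sum of the $E$ smallest terms is at most an $\tfrac{E}{V}$ fraction of the total: bounding the $E$ terms on the left each by $\delta_{V-E}$ and the $V-E$ terms of $\sum_{j=1}^{V-E}\delta_j$ each from below by $\delta_{V-E}$ gives $(V-E)\sum_{j=V-E+1}^{V}\delta_j \le (V-E)E\,\delta_{V-E} \le E\sum_{j=1}^{V-E}\delta_j$, hence $V\sum_{j=V-E+1}^{V}\delta_j \le E\sum_{j=1}^{V}\delta_j$. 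Plugging this in yields $OPT_V - OPT_{V-E} \le \frac{E}{V}\cdot\frac{1}{V}\sum_{j=1}^{V}\delta_j = \frac{E}{V}\,OPT_V$, which rearranges to $OPT_{V-E} \ge (1 - E/V)\,OPT_V$.

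I do not expect a genuine obstacle; the only points requiring care are (i) stating the DMR–greedy characterization cleanly enough to justify that the greedy run for the smaller supply is a prefix of the run for supply $V$ and is optimal, and (ii) keeping the normalization of $OPT_{V-E}$ consistent with that of $OPT_V$ (here both are scaled by $\tfrac1V$; if instead $OPT_{V-E}$ were normalized by $V-E$, the claim would only become easier, since $OPT_{V-E}$ would then be the average of the top $V-E$ marginals, which already dominates the average of all $V$). If one prefers to avoid the greedy sequence altogether, the same proof runs by taking an optimal allocation $S^\ast$ for supply $V$, deleting its $E$ smallest marginal units — which removes a suffix of each client's marginals (ties broken toward smaller index) and so leaves a feasible allocation of $V-E$ units with welfare equal to the top $V-E$ marginals of $S^\ast$ — and applying the identical averaging step.
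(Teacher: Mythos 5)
Your proposal is correct and, in its second formulation (delete the $E$ lowest-marginal units from an optimal $S^\ast$, observe by DMR that a suffix is removed per client so the remainder is a feasible allocation of $V-E$ units, then average), it is essentially identical to the paper's proof. Your primary phrasing, via the greedy-sequence characterization $OPT_m = \tfrac1V\sum_{j=1}^m\delta_j$ with $\delta_1\ge\cdots\ge\delta_V$, is just an equivalent repackaging of the same decomposition, and you additionally spell out the averaging inequality $\sum_{j=V-E+1}^V\delta_j \le \tfrac{E}{V}\sum_{j=1}^V\delta_j$ that the paper asserts without detail.
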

\begin{proof}
Consider the optimal allocation of $V$ shares, $S_V$. The welfare of $S_V$ can be written as the sum over all $V$ shares of the marginal value of allocating that share. I.e. for each player $i$ assigned $s_i$ shares:
$v_i(s_i) = \sum_{j = 1}^{s_i}(v_i(j)-v_i(j-1))$, and the welfare of $S_V$, $OPT_V = w(S_v) = \sum_i v_i(s_i) = \sum_i \sum_{j = 1}^{s_i}(v_i(j)-v_i(j-1))$. Now consider the $E$ shares with the lowest marginal values, and write $S_E$ to denote the allocation of these $E$ shares. By the DMR property, there exists a feasible allocation $S_{V-E}$ of $V-E$ shares respectively such that $S_V = S_{V-E} + S_E$.  Then, $w(S_V) = w(S_{V-E}) + w(S_E) \leq OPT_{V-E} + w(S_E) \rightarrow OPT_{V-E} \geq w(S_{V})-w(S_E) = w(S_V)(1-w(S_E)/w(S_V))$. Since by definition $w(S_E)$ is the sum of the $E$ lowest marginal values of the shares $V$ in the allocation $S_V$, $w(S_E)/w(S_V) \leq E/V$, which proves the claim.
\end{proof}

\begin{lemma}[Feasibility.]
\label{lem:feasibility}
With probability at least $1-\beta$, if $S$ is the allocation returned by $\pauction$, $|S| = \sum s_i \leq V$.
\end{lemma}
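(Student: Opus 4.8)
The plan is to track, for each share that has been handed out, how long it has ``survived'' in some player's allocation set, and to argue that any share which survives to the end of the auction was allocated recently enough that only a bounded number of shares can be alive simultaneously. The key bookkeeping device is the per-share counter $s[count]$: when a share $s$ is given to player $i$, we set $s[count] = 0$, and at each subsequent pass through the while-loop we increment $s[count]$ by $\Delta = C_{\epsilon'}[t] - C_{\epsilon'}[t-n]$, the noisy estimate of the number of bids placed in the trailing window. A share is discarded as soon as $s[count] \geq V'$. So a share that is still held at termination must have $s[count] < V'$, i.e. the accumulated noisy bid count since it was allocated is strictly less than $V' = V - 2E$.

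First I would condition on the $(E,\beta)$-usefulness event for the counter $C_{\epsilon'}(nT)$, which holds with probability at least $1-\beta$ and says $|C_{\epsilon'}[t] - s_b(t)| \leq E$ for all $t$ simultaneously (here $E$ is exactly the error bound from the \cite{chan} theorem with the parameters plugged in). On this event, the accumulated per-share counter $s[count]$ for a share allocated at time $t_0$ and still alive at time $t_1$ is a telescoping sum of $\Delta$ values that equals $C_{\epsilon'}[t_1] - C_{\epsilon'}[t_0]$ (up to the discretization of when the share was inserted), which differs from the true number of bids $s_b(t_1) - s_b(t_0)$ placed in that interval by at most $2E$. Since an alive share has $s[count] < V' = V - 2E$, the true number of bids placed since it was allocated is at most $V' + 2E = V$.

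Now I would count: every share currently held corresponds to a distinct bid (the bid that caused it to be allocated), and each such bid occurred within a window during which at most $V$ true bids total were placed — more precisely, order the currently-held shares by allocation time; the oldest one was allocated at a moment after which fewer than $V$ further bids have been placed, and every other held share was allocated even later, hence corresponds to one of those (at most $V$, minus the oldest's own bid) later bids, plus the oldest itself. This gives $|S| = \sum_i s_i \leq V$. I should double-check the off-by-one in how $\Delta$ is accumulated versus when the share's counter starts at $0$ and whether the bound is $V$ or $V-1$; since the statement only claims $\le V$, a constant slack is harmless.

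The main obstacle I anticipate is making the telescoping identity for $s[count]$ precise despite the fact that $\Delta$ is recomputed once per player-iteration inside the inner \textbf{for} loop (so within a single round a share's counter is bumped $n$ times, once per player), and that shares are inserted mid-round. I would handle this by defining things at the granularity of the single global time index $t$ that is incremented at the end of each player-iteration, noting that $\Delta = C_{\epsilon'}[t] - C_{\epsilon'}[t-n]$ is a sliding-window count of width $n$ over a stream that has one slot per player-iteration, and that summing these widths-$n$ windows as $t$ advances overcounts each actual bid by a bounded factor — or, more cleanly, arguing directly that $s[count]$ for a share inserted at global time $t_0$ equals (on the usefulness event, up to $\pm 2E$) the true bid count accumulated over $[t_0, \text{now}]$ and that this is the quantity that matters. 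The slack $2E$ in $V' = V - 2E$ is chosen precisely to absorb this counter error on both ends, so the computation should close once the window-sum accounting is set up carefully.
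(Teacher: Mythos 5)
Your proposal is correct and follows essentially the same approach as the paper: condition on the $(E,\beta)$-usefulness event for the counter, observe that $s[count]$ telescopes to a difference of two counter values and hence tracks the true intervening bid count to within $2E$, deduce that any share still held at termination was allocated within the last $\approx V$ true bids (so with $V' = V - 2E$ the $2E$ slack exactly absorbs the counter error), and then count that at most $V$ distinct bids fall in that window. The off-by-one and mid-round/$\Delta$-accumulation worries you flag are real minor subtleties that the paper itself glosses over, but your resolution (telescoping at the granularity of the per-player-iteration global clock, identical distinct-bid counting) matches the paper's reasoning.
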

\begin{proof}
With probability $1-\beta$, we know that the error of the private bid counting sequence $C_{\epsilon'}(nT)$ is less than $E$ for every time step:
\begin{equation}
\label{err}
\sup_{t = 1 \ldots nT}|C_{\epsilon'}[t]-s_b[t]| \leq E,
\end{equation}
where again $s_b[t]$ denots the true (non-noisy) bid count at time $t$.

We claim then that if (\ref{err}) holds, then for every allocated share $s$  allocated to any client $i$ at any time $t$, the error in $s[count]$, which counts the number of bids since $s$ was allocated, is at most $2E$. We first write $s[count] = C_{\epsilon}[t] - C_{\epsilon}[t-ri]$, where $r$ is the number of rounds since $s$ was allocated. The error of $s[count]$ is:
 $$s[count] - s_b[t]-s_b[t-ri] \leq |C_{\epsilon}[t]-s_b[t]| + |C_{\epsilon}[t-ri]-s_b[t-ri]| \leq 2E$$

Now consider the time at termination $T*$, and let $t^* < T*$ denote the time at which the $V^{th}$ last bid was made. For any share $s$ allocated at any time $t < t^*$, at $T*$ there have been greater than $ V $ shares allocated since $s$ was allocated. Hence, $s[count] \geq V- 2E  = V'$. Since a share is unallocated whenever its estimate of the bids placed since it was allocated, $s[count]$,  exceeds $V'$, we know that shares allocated prior to the last $V$ bids have all been unallocated by $T*$. Thus, the number of shares allocated at termination is upper bounded by $V$.
\end{proof}

\ar{This is the kind of thing that can be broken out as a lemma}
\begin{lemma}[Approximate Clearing]
\label{appx}
With probability at least $1-\beta$, if $S$ is the allocation returned by $\pauction$, $|S| = \sum s_i \geq V-4E$.
\end{lemma}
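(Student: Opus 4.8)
The plan is to show that the early-stopping rule forces the auction to run long enough that at least $V - 4E$ shares are still held at termination. First I would condition on the good event~(\ref{err}): with probability at least $1-\beta$, $\sup_{t \le nT}|C_{\epsilon'}[t] - s_b[t]| \le E$, and hence (exactly as in the proof of Lemma~\ref{lem:feasibility}) the per-share counter $s[count]$ has error at most $2E$ relative to the true number of bids placed since that share was allocated. Everything below is deterministic given this event.

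Next I would analyze the termination condition. The while loop exits either because $t > T$ or because $round_b \le \rho n - 2E$, where $round_b = C_{\epsilon'}[t] - C_{\epsilon'}[t-n]$ is the noisy count of bids in the last round. I would first argue the loop cannot exit via $t > T$ without having already allocated enough: by the termination analysis of Theorem~\ref{thm:auction1}, $T = \frac{2V}{\alpha\rho n}$ rounds is more than enough for the price to exceed $1$, so in fact the bid-count stopping criterion triggers first (one has to check that $\rho n - 2E > 0$, which is exactly guaranteed by condition~(2)/(3) of Theorem~\ref{privacc}, i.e. $n \ge 8E/\rho$ gives $2E \le \rho n/4$). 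So at the true termination time $T^*$, the noisy round-bid count satisfies $round_b \le \rho n - 2E$, and therefore the \emph{true} number of bids in that final round is at most $round_b + 2E \le \rho n$. Conversely, in every round \emph{before} the last one, the noisy count exceeded $\rho n - 2E$, so the true per-round bid count was at least $\rho n - 4E > 0$; in particular bids were placed in every prior round, and by the same argument as in Theorem~\ref{thm:auction1} the price has risen at most once per $V'$ bids, so $p^* \le 1 + \alpha$ and the auction did in fact run to a near-clearing state.

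The core counting step is then: how many shares are still held at $T^*$? A share $s$ allocated at time $t$ is released once $s[count] \ge V'$, i.e. once the noisy count of bids since $t$ reaches $V' = V - 2E$, which (by the $2E$ error bound) means the true number of subsequent bids is at least $V' - 2E = V - 4E$. So the only shares that can have been released by $T^*$ are those allocated before the last $V - 4E$ bids. Equivalently, every one of the last $V - 4E$ bids corresponds to a share that is \emph{still} held at termination (it has not accumulated enough subsequent bids to be evicted), and these shares are distinct since each bid allocates a fresh share. Hence $|S| \ge V - 4E$ — provided at least $V - 4E$ bids were ever placed in total, which follows because the auction ran through enough full rounds (each contributing $\ge \rho n - 4E \ge 1$ bids) before stopping; more cleanly, if fewer than $V-4E$ bids had \emph{ever} been placed then no share would ever have been released, the supply would never have been a binding constraint, and the price-increment/DMR argument of Theorem~\ref{thm:auction1} shows the auction would have kept allocating until either supply $V' \ge V - 4E$ is exhausted or the price exceeds $1$ — and the latter cannot happen before the former given that the stopping rule only fires on low bid counts.

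I expect the main obstacle to be the bookkeeping around which ``time'' and which ``count'' is being referenced — distinguishing the noisy counter $C_{\epsilon'}$ from the true count $s_b$, the per-round count $round_b$ from the per-share count $s[count]$, and propagating the single error bound $E$ through both (it becomes $2E$ for differences of two counter reads). The conceptual content is light once the good event is fixed; the risk is an off-by-$E$ slip. I would therefore state the $2E$-error-for-differences claim once as a sub-claim (reusing the computation already done in Lemma~\ref{lem:feasibility}) and then apply it uniformly.
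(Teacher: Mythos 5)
Your core counting argument matches the paper's exactly: condition on event~(\ref{err}), observe that each per-share counter $s[count]$ inherits a $2E$ error bound, and then note that for each of the last $V-4E$ bids the true number of subsequent bids is less than $V-4E$, so $s[count] < (V-4E)+2E = V' $, hence none of these (distinct) shares have been evicted. That is precisely the paper's proof of Lemma~\ref{appx}.

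The one place where you are shakier than the paper is the auxiliary fact that at least $V-4E$ bids were ever placed. The paper establishes this with a direct, self-contained observation: during the first roughly $V-3E$ time steps the true bid count is at most $V-3E$, so $C_{\epsilon'}[t] \le (V-3E)+E = V'$ and the price is still $0$; at price $0$ every player bids, so $s_b[T^*] \ge V-3E \ge V-4E$. Your route --- arguing by contradiction that if fewer than $V-4E$ bids had been placed then no share would have been released and ``supply would never have been a binding constraint,'' and deferring to the termination analysis of Theorem~\ref{thm:auction1} --- is correct in spirit (if $s_b < V-4E$ then $C_{\epsilon'} < V'$, so $p=0$, so everyone keeps bidding, contradiction), but as written it invokes a notion of ``supply exhaustion'' that doesn't cleanly exist in $\pauction$, and Theorem~\ref{thm:auction1}'s termination argument is for the non-private auction with a different loop condition, so the citation does not apply directly. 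You should replace that sentence with the paper's direct ``first $V-3E$ time steps have price zero'' observation; the rest of your plan is sound and matches the paper.
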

\begin{proof}
Again let $T^*$ be the time at termination, and note that (\ref{err}) holds with probability at least $1-\beta$. We first show that the total number of bids, $s_b[T^*] \geq V-3E$. This is clear, because for the first $t = 1 \ldots V-3E$ rounds, the counter $C_\epsilon[t] \leq V-3E + E = V'$, and hence $p = 0$. At $p = 0$ every player $i$ bids. Hence, since $s_b[T] \geq V-3E$, we can consider the last $V-4E$ bids over the course of the auction. By definition, the true number of bids that have been submitted since any of these shares has been allocated is less than $V-4E$. Moreover, for each of these allocated shares $s$ the error of $s[count]$ is less than $2E$. And so, at $T^*, s[count] \leq V-3E + 2E = V'$; therefore, none of these shares have been unallocated. Thus, there are at least $V-4E$ shares allocated at termination.
\end{proof}

Let $p^* = \alpha\lfloor \frac{C_\epsilon'[T^*]}{V'} \rfloor$ denote the price at termination. We say that a client $i$ is \emph{unsatisfied} if at $T^*$, client $i$ would still bid; e.g. if $v_i(S_i + 1)- v_i(S_i) \geq p^*$.  Then, we claim that under (\ref{err}) at $T^*$ there are at most $\rho n$ unsatisfied clients:
\begin{itemize}
\item If the algorithm terminates early, then $round_b < \rho n - 2E$, which means that the number of bids in the last round, $s_b[T^*]-s_b[T^*-n]$ is at most $\rho n$.
\item If the algorithm does not terminate early, then after $T$ rounds, there have been at least $\rho n - 4E$ bids per round, for a total of $T(\rho n-4E)$ bids. Thus, the price is at least $\frac{\alpha T(\rho n-4E)}{V}$. By assumption, $E \leq \rho n/8$, substituting $T = \frac{2V}{\alpha \rho n}$ shows that the price is $\geq 1$, and there are no bidders.
\end{itemize}

We now show that under the allocation $S$, each satisfied player receives her approximately most preferred bundle at the current price level. If $I^*$ denotes the at least $(1-\rho)$ fraction of satisfied bidders at termination, then for $i \in I^*$ we show:

\begin{equation}\label{appxopt}
v_i(S[i]) - p^*S[i] \geq \max_{l \in [U]}v_i(l)-p^*S[i]- S[i](2\alpha)
\end{equation}

This is very similar to the argument used to show optimality of the non-private ascending price auction. If $S[i] = 0$ the statement holds trivially, because the fact that $i$ has chosen not to bid at price $p^*$ means all the bundles on the right hand side have negative value, by the DMR property.
So, suppose that $S[i] > 0$. Let $p_i$ be the price at which player $i$ last bid. Again, whenever $V$ bids go by, since the error of the counter is less than $E$, every share must have been reallocated. Since $S[i] > 0$, this means $i$ last bid within the last $V$ bids, which in turn means that the counter could have incremented by at most $V + E$ counts since $i$ last bid. In turn, this means that the price could have gone up by at most $\alpha + \lceil \frac{E}{V} \rceil \alpha \leq 2\alpha$. So $p_i \geq p^*-2\alpha$. Following an identical chain of reasoning as in the proof of Theorem~\ref{thm:auction1}, we recover Equation~\ref{appxopt}. Now let $S'$ be any other feasible allocation of $V'$ shares. Then

$$
v_i(S[i])-p^*S[i] \geq v_i(S'[l])-S'[l]p^*)-S[i]2\alpha \; \forall i \in I^* \rightarrow
$$

$$
(1-\rho)n + \sum_{i \in S*} v_i(S[i])-\sum_{i}S[i]p^* \geq \sum_{i=1}^n(v_i(S'[l])-S'[l](p^* + 2\alpha))
$$
where the second line follows since $v_i \leq 1$.
Since $S'$ is feasible, $\sum_i(S'[l]p^*) \leq V'p^*$. Moreover, since at least $V-4E$ shares are allocated at termination,
$$
\rho n + \sum_{i \in S*} v_i(S[i])-(V-4E)p^* \geq \rho n + \sum_{i \in S*} v_i(S[i])-\sum_{i}S[i]p^* \geq \sum_{i=1}^n(v_i(S'[l]))-V'(p^* + 2\alpha))
$$
Rearranging, and using $p^*\leq 1$, this shows that:
$$
v(S) \geq v(S') - \rho - \frac{2(1-2\alpha)E + 2\alpha V}{n}
$$

Since this holds for any allocation $S'$, it certainly holds for the optimal allocation.
Thus, $v(S) \geq OPT_{V-2E} - (1-\rho) - \frac{2(1-2\alpha)E + 2\alpha V}{n}$, which by Lemma~\ref{multappx}, means $$v(S) \geq (1-2E/V)OPT_V- (1-\rho) - \frac{2(1-2\alpha)E + 2\alpha V}{n}$$
The result then follows from the numbered conditions in the theorem statement.  \\

\textbf{Privacy Analysis:}
With probability $1-\beta$ over the randomness in the private bid counter, there are at least $\rho n - 4E$ bids at
every round of the algorithm. Moreover, after the noisy price is $1$, there are no bids. Hence, after $\frac{V}{\alpha} + E$ total bids, the noisy counter reaches $\frac{V}{\alpha}$, and bidding ends. Thus, with probability $1-\beta$ there are at most $\frac{\frac{V}{\alpha} + E}{\rho n - 4E}$ rounds, which means that each player bids at most that many times. Thus, the sensitivity of the counter is $\frac{\frac{V}{\alpha} + E}{\rho n - 4E}$. Since we take $\alpha \leq \frac{V}{\rho n}$ (Condition $1$), $\frac{\frac{V}{\alpha} + E}{\rho n - 4E} \leq \frac{2V}{\alpha \rho n}$. Hence, with probability $1-\beta$, setting $\epsilon' = \frac{\epsilon}{\frac{2V}{\alpha \rho n}}$ guarantees $\epsilon$ differential privacy \cite{chan}. Since this only happens with probability $1-\beta$, this guarantees that $C_{\epsilon'}(nT)$ satisfies $(\epsilon,\beta)$-differential privacy. Since each individual allocation $S_i$ is computed purely as a function of $C_{\epsilon'}(nT)$, by the billboard lemma, \pauction achieves $(\beta, \epsilon)$-JDP. \ar{Would be nicer just to have people stop bidding if they are ever asked to make too many bids. This bounds the sensitivity of the running sum with probability 1, so we don't have to state an $(\epsilon,\delta)$ privacy bound, and pulls the $\beta$ failure probability into the utility analysis.} We remark that if we are concerned with achieving the stronger $(\epsilon, 0)$-JDP, then instead of concluding that with high probability no client bids more than $\frac{V/\alpha + E}{\rho n - 4E}$ many times, we could instead modify $\pauction$ in such a way that after a client bids $\frac{V/\alpha + E}{\rho n - 4E}$ many times, they automatically stop bidding. In this case, the sensitivity of the bid counter with respect to any client will always be bounded by $\frac{V/\alpha + E}{\rho n - 4E}$, and the counter of \cite{chan} achieves $(\epsilon, 0)$ differential privacy, which implies that $\pauction$ achieves $(\epsilon, 0)$-JDP by Lemma~\ref{lem:billboard}. We note that our accuracy statement will be unchanged, since in the analysis we are already conditioning on the fact that the error is bounded by $E$, whereby no client exceeds $\frac{V/\alpha + E}{\rho n - 4E}$ bids.
\end{proof}

\textbf{Proof of Theorem~\ref{thm:privtrue}.}
\begin{proof}
We first condition on the private trajectory of bids $C_{\epsilon'}(nT)$, which induces a private trajectory of prices $\vec{p}$. Then, given the trajectory of prices, the client utilities $v^i_A(\cdot)$ are deterministic quantities. We claim that for all $\vec{p}$,
$v^i_A(Q^T|\vec{p}) \geq v^i_A(Q_i|\vec{p})$. The claim will follow from the fact that, given any fixed trajectory $\vec{p}$, we imagine running two versions of $\pauction$, one where client $i$ reports $Q_i(r|u_i)$ and the other where client $i$ reports $Q^T$. Then, restricting our attention to the shares $s$ that $i$ has at termination, we will show that in any round in the $Q_i$ auction where $s$ was acquired, $s$ was also acquired in the $Q^T$ auction. Hence, the $Q^T$ auction results in at least as many shares at termination as the $Q_i$ auction, which since the client utility is monotonically increasing in the number of shares $s_i$, shows the claim.

In the next paragraph we stop explicitly writing the conditioning on $\vec{p}$ and assume a fixed sequence of prices.
Then, we analyze $2$ cases:
\begin{enumerate}
\item The final price $p^*$ at termination is $\leq 1$.
\item The final price $p^*$ at termination $1 + \alpha$

\end{enumerate}
\textbf{Case 1:}  Assume that $i$ is one of the $(1-\rho)n$ fraction of bidders who is satisfied \ar{Defined?} at $p^*$. Then, it is immediately clear that if $\vec{p}$ falls into case $1$, then $v^i_A(Q^T) = u_i$. \ar{confusing notation since $p$ is price} This follows from the fact that the marginal value under $Q^T$ of every additional share is $1$ up until share $u_i + 1$, and so if the price is $\leq 1$, client $i$ will bid until she has at least $s = u_i$ shares. Since this maximizes her realized payoff function $\min(u_i, s)$, in case $1$ we have $v^i_A(Q^T|\vec{p}) \geq v^i_A(Q_i|\vec{p})$. \\

\textbf{Case 2:}
We first note that since we have drawn $u_i, r_i$, this implies that $Q_i(r_i|u_i) > 0, Q_i(u_i) > 0, Q_i(r_i) > 0$, and hence $Q_i(u_i|r_i) = \frac{Q_i(r_i|u_i)Q_i(u_i)}{Q_i(r_i)} > 0$. Thus, under $Q_i$ the marginal value of the $j^{th}$ share, $\Prob{u_i \sim Q_i(u|r_i)}{u_i \geq j}$, is less than $1$ for $j > u$. Then, as we have used repeatedly, any share $s$ that client $i$ holds at termination must have been acquired at a price $p_i \geq p^* - \alpha = 1$, since the price increments by $\alpha$ after only $V'$ ticks of the noisy counter. But, since the marginal value of every share beyond the $u^{th}$ is strictly less than $1$, any share $s$ held at termination in the $Q_i$ auction was acquired when client $i$ had strictly less than $u_i$ shares. Consider the time $t$ at which $s$ was required, only now consider the $Q^T$ auction. There are only two possibilities: if $i$ holds $\leq u_i$ shares at time $t$, then $i$ will also acquire share $s$. In either case, client $i$ in the $Q^T$ auction holds at least as many shares as client $i$ in the $Q_i$ auction after time $t$, and so this holds at $t = T^*$, which proves the claim.

So we have shown that, given a fixed price trajectory $\vec{p}$, if client $i$ is satisfied at termination, $v^i_A(Q^T|\vec{p}) \geq v^i_A(Q_i|\vec{p})$.

We claim that for at least $(1-\sqrt{\beta + (1-\beta)\rho})n$ clients $i$, the probability that $i$ is unsatisfied at termination is less than $\sqrt{\beta + (1-\beta)\rho}$. This is clear because we know that if the error of the private counter does not exceed $E$, which happens with probability $(1-\beta)$, the number of unsatisfied clients at termination is less than $\rho n$. Since the number of unsatisfied bidders cannot exceed $n$, this implies the expected number of unsatisfied bidders is bounded by $(\beta + (1-\beta)\rho)n$. But, if for more than $\sqrt{\beta + (1-\beta)\rho}$ clients, the probability of being unsatisfied exceeded $\sqrt{\beta + (1-\beta)\rho}$, the expected number of unsatisfied bidders would exceed $(\beta + (1-\beta)\rho)n$, which is a contradiction. \\

Let $i$ be one such client, and let $I$ be the indicator that $i$ is satisfied at termination. Now:

$$v^i_A(Q^T) = \mathbb{E}_{(I, \vec{p}) \sim \pauction(Q^T)}[v^i_A(Q^T)|\vec{p}, I] \geq e^{-\epsilon}\mathbb{E}_{(I, \vec{p}) \sim \pauction(Q_i)}[v^i_A(Q_i) |\vec{p}, I] = $$
 $$e^{-\epsilon}\int_{\vec{p}} v^i_A(Q_i | \vec{p} , I =1)\Pr[\vec{p}, I = 1] +  v^i_A(Q_i|\vec{p}, I = 0)\Pr[I = 0 , p]  \geq $$
 $$e^{-\epsilon}\int_{\vec{p}} v^i_A(Q_i | \vec{p} , I =1)\Pr[\vec{p}, I = 1]  = e^{-\epsilon}\Pr[I = 1]v^i_A(Q_i|I = 1)$$

Finally, since $\Pr[I = 1] \geq 1-\sqrt{\beta + (1-\beta)\rho}, \Pr[I = 1]\Pr[I = 1]v^i_A(Q_i|I = 1) \geq e^{-\epsilon}\Pr[I = 1]v^i_A(Q_i)-e^{-\epsilon}\frac{\sqrt{\beta + (1-\beta)\rho}}{1-\sqrt{\beta + (1-\beta)\rho}}$.
\end{proof}

\section{Proofs from Section~\ref{sec:reduction}.}
\begin{algorithm}
\caption{Greedy Mechanism}
\begin{algorithmic}[1]
\label{mech:greedy}
\Procedure{$\A$}{Utility distributions $U_i$ for $n$ clients, $V$ shares to allocate at each of $T$ rounds}
\For{$t = 1\ldots T$}
       \For{i = 1 \ldots n}
		\State Client $i$ draws $u_{it} \sim U_i$
		\State Client $i$ picks request distribution $Q_{it} = L_t^i(\mathcal{H}_{t}^{i}, u_{it})$
		\State Client $i$ draws $r_{it} \sim Q_{it}$, and submits $r_{it}$
	\EndFor
	\State $\A$ updates its estimates $\hat{Q}_i(r_{it}) = \textbf{1}_{r_{it}}$ 
	\State $\A$ computes an allocation $S_t = A(\hat{Q}_1(r_{1t}), \ldots \hat{Q}_t(r_{nt}))$
	\State $\A$ observes the executed shares $v_i(S_t)$ for each client
	\State $\A$ updates its estimates of the conditionals $\hat{Q}_i(r_{it})$
	\State $\A$ updates the history: $H_{t+1} = H_t \cup (r_{it}, s_{it}, v_i(S_t))_{i=1}^n$
\EndFor
\EndProcedure
\end{algorithmic}
\end{algorithm}

\begin{algorithm}
\caption{Greedy Private Mechanism}
\begin{algorithmic}[1]
\label{mech:priv}
\Procedure{$\A$}{Utility distributions $U_i \in \Delta([U])$ for $n$ clients, $V$ shares to allocate at each of $T$ rounds, $\pauction, \epsilon, \alpha$}
\For{$t = 1\ldots T$}
       \For{i = 1 \ldots n}
		\State Client $i$ draws $u_{it} \sim U_i$
		\State Client $i$ picks request distribution $Q_{it} = L_t^i(\mathcal{H}_{t}^{i}, u_{it})$ 
		\State Client $i$ draws $r_{it} \sim Q_{it}$, and submits $r_{it}$
	\EndFor
	\State $\A$ updates its estimates $\hat{Q}_i(r_{it}) = \textbf{1}_{r_{it}}$ 
	\State $\A$ computes an allocation $S_t = \pauction(\hat{Q}_1(r_{1t}), \ldots \hat{Q}_t(r_{nt}), \epsilon, \alpha)$
	\State $\A$ observes the executed shares $v_i(S_t)$ for each client
	\State $\A$ updates its estimates of the conditionals $\hat{Q}_i(r_{it})$
	\State $\A$ updates the history: $H_{t+1} = H_t \cup (r_{it}, s_{it}, v_i(S_t))_{i=1}^n$
\EndFor
\EndProcedure
\end{algorithmic}
\end{algorithm}

\textbf{Proof of Theorem~\ref{thm:apxtrue}.}
\begin{proof}
The crux of the proof relies on the fact that once we have fixed the request distributions $Q_{-it}$ at each round of all other players, 
and given that Algorithm~\ref{mech:priv} assumes the clients are truthful, requesting truthfully is an approximately dominant strategy for most of the
clients by Theorem~\ref{thm:privtrue}. Formally, by Theorem~\ref{thm:privtrue}, fixing $Q_{-it}$,  in any fixed round the truthfulness guarantee applies to a $1-\beta^2/T$
fraction of the clients $i$. Hence over all $T$ rounds it applies to a $1-\beta^2$ fraction of the clients. This implies that over the random draw of 
$Q_{-it}$, for at least $1-\beta$ fraction of clients, the probability of the truthfulness guarantee holding in every round is $\geq 1-\beta$. Else, 
the expectation of the total number of unsatisfied clients over all $T$ rounds would be strictly greater than $\beta n \cdot \beta = \beta^2 n$, which contradicts the fact that with probability $1$ it is $\leq \beta^2 n$ by Theorem~\ref{thm:privtrue}. Let $i$ be one such client. 

We will also require the fact that under $L_{i}^{t}$ or under $L_{i*}^{t}$ any given realization of $Q_{-it}$ is equally likely. We first observe that the outputs of the mechanism to each client are only a function of the estimated distributions $\hat{Q}_i(r_{it})$ which is the only dependence the allocations have on the strategies. Hence privacy in the $\hat{Q}_i(r_{it})$ guarantees privacy in the $L_{i}^{t}$. In the case where $\hat{Q}_i(r_{it}) = \mathbf{1}_{r_{it}}$, the overall procedure is $(\epsilon, \beta)$-JDP, whereas if the estimation procedure uses data from all of the rounds we have to use the composition rule for differential privacy to get an overall $\tilde{O}(\sqrt{T}\epsilon, \beta T)$ privacy guarantee. Then since the overall mechanism is $(\epsilon, \beta)$-JDP in the strategies $L_{i}^{t}$, 
the client distributions at each round $Q_{-it}$, which are a post-processing of the outputs of the mechanism to all of the other clients, are an $(\epsilon, \beta)$-joint differentially private function of $L_i^{t}$. We are now equipped to show the main result:

$$
v_i(L_{i}^{1}, \ldots, L_{i}^{n}) = \int_{Q_{-it} \in \Delta([U])^{(n-1)T}}\sum_{t=1}^{T}\E[v_i(S_t)|Q_{-it}, (L_{i}^{t})]\Pr[Q_{-it}|(L_{i}^{t})]
$$
 By the argument above, we also know that there exists a  subset $\Omega \subset \Delta([U])^{(n-1)T}$ such that $\Pr[Q_{-it} \in \Omega] \geq 1-\beta$ and for all $Q_{-it} \in \Omega$, client $i$ is approximately truthful. Hence:
$$
v_i(L_{i}^{1}, \ldots, L_{i}^{n}) \leq \int_{Q_{-it} \in \Omega}\sum_{t=1}^{T}\E[v_i(S_t)|Q_{-it}, (L_{i}^{t})]\Pr[Q_{-it}|(L_{i}^{t})] + \beta UT
$$
Moreover, by Theorem~\ref{thm:privtrue} for $Q_{-it} \in \Omega, \E[v_i(S_t)|Q_{-it}, (L_{i}^{t})] \leq e^{\epsilon}\E[v_i(S_t)|Q_{-it}, (L_{i*}^{t})] + \frac{\beta^2/T}{1-\beta^2/T}$, and by  $(\epsilon, \beta)$-JDP $\Pr[Q_{-it}|(L_{i}^{t})] \leq e^{\epsilon}\Pr[Q_{-it}|(L_{i*}^{t})] + \beta$. Substituting both of these inequalities into the above equation gives:
$$
v_i(L_{i}^{1}, \ldots, L_{i}^{n}) \leq e^{2\epsilon}\int_{Q_{-it} \in \Omega}\sum_{t=1}^{T}\E[v_i(S_t)|Q_{-it}, (L_{i*}^{t})]\Pr[Q_{-it}|(L_{i*}^{t})] + 2\beta UT + e^{\epsilon}\frac{\beta^2}{1-\beta^2/T},
$$
Giving:
$$
v_i(L_{i}^{1}, \ldots, L_{i}^{n}) \leq e^{2\epsilon}v_i(L_{i}^{1}, \ldots, L_{i}^{n}) + 2\beta UT + e^{\epsilon}\frac{\beta^2}{1-\beta^2/T},
$$
as desired.
\end{proof}

\textbf{Sketch of the Naive Learning Mechanism.}\\
\begin{enumerate}
\item Taking $\tau \approx \frac{1}{n U}$, after $\approx T/\tau = \text{poly}(n, U, T)$ rounds, for any $r \in [U], i \in [n]$ such that $Q_i(r_i) > \tau$,  we will have observed $\approx T$ draws from the conditional distribution $Q_i(u_i|r_i)$.
\item Hence, after a polynomial number of rounds, we can learn each conditional $Q_i(u_i|r_i)$ arbitrarily well for any $i, r_i$ such that $Q_i(r_i) > \tau$.
\item The lender uses the observed samples to compute estimates $\hat{Q}_i(u|r_i)$ for every $i, r_i$, and upon observing a draw $\vec{r} = (r_1, r_2, \ldots r_n) \sim Q$, uses \emph{Greedy} to compute the optimal allocation with respect to these $\hat{Q}_i$.
\item A union bound shows that with high probability for every $r_i = \vec{r}_i$, $Q_i(r_i) > \tau$, and hence we have a good estimate of each of the conditional distributions.
\item Standard Chernoff bounds characterize the sample complexity of learning and show that the optimal allocation with respect to $\hat{Q}_i$ is approximately optimal with respect to the true distributions $Q_i$.
\end{enumerate}

Note that we may view the ``learning'' phase as a \emph{means} by which the clients can communicate distributions $\hat{Q_i}$ to the lender. We note that (so long as the lender observes each report $r_i$ at least once in the learning phase), \emph{if} client $i$ is reporting truthfully, we will have $\hat{Q_i} = Q_i^T$, the truthful reporting distribution. Since it is a dominant strategy to report $Q^T$ even when the client has the ability to report \emph{any} distribution, since $Q_i^T$ is possible for the client to report through this more restricted learning interface, truthful reporting remains a dominant strategy.

\begin{example}
\label{example}
Consider the case of $V = 1, n = 2$ clients over $T = 100$ rounds, where both clients have demands fixed at $1$ share in every round. Further, suppose client $1$ has the following strategy: if client $1$ receives a share in the first round, she will not request any more shares for the subsequent $99$ rounds. If she does not receive a share, she will play truthfully for the rest of the rounds, requesting $1$ share at each round. In this game the dominant strategy for client $2$ is to request $0$ shares in the first round and then be guaranteed a payoff of $99$ shares over the subsequent rounds, as opposed to playing truthfully at all rounds, which if ties are broken randomly gets her $50$ shares in expectation. 
\end{example}

\end{document}